\def\bc{\begin{center}}
	\def\ec{\end{center}}
\def\s2c{\vskip 2cm}
\def\bt{\begin{Theorem}}
	\def\et{\end{Theorem}}
\def\bd{\begin{Definition}}
	\def\ed{\end{Definition}}
\def\bl{\begin{Lemma}}
	\def\el{\end{Lemma}}
\def\be{\begin{Example}}
	\def\ee{\end{Example}}
\def\bcor{\begin{Corollary}}
	\def\ecor{\end{Corollary}}
\def\br{\begin{Remark}}
	\def\er{\end{Remark}}
\def\mysection{\setcounter{equation}{0}\section}
\newtheorem{Lemma}{Lemma}[section]
\newtheorem{Theorem}[Lemma]{Theorem}
\newtheorem{Definition}[Lemma]{Definition}
\newtheorem{Corollary}[Lemma]{Corollary}
\newtheorem{Remark}[Lemma]{Remark}
\date{} 
\title{Analysis of a mathematical model for malaria using data-driven approach} 
\author{Adithya Rajnarayanan$^{1}$, Manoj Kumar$^{1}$, Abdessamad Tridane$^{2,*}$  \\
$^{1}$School of Engineering and Science \\
Indian Institute of Technology Madras, Zanzibar \\
PO Box 394, Bweleo,
Urban West - 71215 \\
Zanzibar, Tanzania \\ 
$^{2}$Department of Mathematical Sciences \\ United Arab Emirates University\\ P.O. Box 15551, Al Ain, UAE  \\
Authors' email IDs: manoj@iitmz.ac.in, zda23m017@iitmz.ac.in, a-tridane@uaeu.ac.ae \\ 
$*$Corresponding author's email: a-tridane@uaeu.ac.ae}
\begin{document}
\maketitle
	\author
	\noindent {\bf Abstract}: 
    Malaria is one of the deadliest diseases in the world, every year millions of people become victims of this disease and many even lose their lives. Medical professionals and the government could take accurate measures to protect the people only when the disease dynamics are understood clearly. In this work, we propose a compartmental model to study the dynamics of malaria. We consider the transmission rate dependent on temperature and altitude. We performed the steady state analysis on the proposed model and checked the stability of the disease-free and endemic steady state. Since the dynamics of a system are influenced by the parameters, we use three different architectures of neural network namely ANNs(artificial neural networks), RNNs(recurrent neural networks), and PINNs(physics-informed neural networks) to estimate the parameters of the SIR-SI dynamical system and then using the estimated parameters, trajectories of the compartments are predicted. To understand the severity of a disease, it is essential to calculate the risk associated with the disease. In this work, the risk is calculated using dynamic mode decomposition(DMD) from the trajectory of the infected people.  \\
	\vskip .5cm \noindent {\em\bf Key Words}: Malaria model, Compartmental model, Data-driven methods, Neural network, DMD. 
	\vskip .5cm \noindent {\em \bf AMS Subject Classification}: 92B05, 92B20, 92D25, 92D30 
	\footnote{$^*$Corresponding author's email}
\mysection{Introduction} 
According to the World Health Organization (WHO) report from 2022, Africa bears the highest burden of malaria among all regions. Notably, 94 \% of the global malaria cases, totaling 233 million, occur in Africa, accounting for 95 \% of the worldwide fatalities, which number approximately 58,000 (see Figure \ref{bar11}). These statistics highlight the significant mortality impact of malaria on the African continent, underscoring the urgent need to develop models that enable healthcare professionals to better understand the dynamics of this disease.

Differential equation-based compartmental models are commonly employed to analyze disease transmission. This approach divides the population into non-overlapping sequential compartments, allowing for the study of the movement of individuals between compartments using ordinary differential equations. Some of the most widely utilized compartments include SIR (Susceptible, Infected, Recovered), SIRD (Susceptible, Infected, Recovered, Death), and SIRDV (Susceptible, Infected, Recovered, Dead, Vaccinated).

One of the earliest mathematical models for malaria transmission was developed by Ross, comprising two compartments: one for infected humans and another for infected mosquitoes. Key parameters incorporated in this model includes the biting rate, recovery rate, and death rate. Notably, Ross’s model did not account for the parasite's latent period, a critical factor in malaria transmission dynamics. For further details on Ross's work, readers can refer to \cite{ref20,ref21,ref22,ref24}.
Subsequently, MacDonald advanced the modeling framework by introducing a three-compartment model that included both susceptible and infected compartments for mosquitoes, alongside a single infected compartment for humans. Although this model considered the latent period of parasites in mosquitoes, it overlooked the latent period within humans, which represents a significant limitation. More information on MacDonald's model can be found in \cite{ref25}. Finally, the Anderson model expanded this approach to four compartments, incorporating susceptible and infected compartments for both species. For additional insights into this model, one can refer to \cite{ref26}. For a comprehensive overview of mathematical models of malaria, the review article in \cite{ref0} is recommended.

\begin{figure}[htbp]
    \centering
    \includegraphics[scale=0.4]{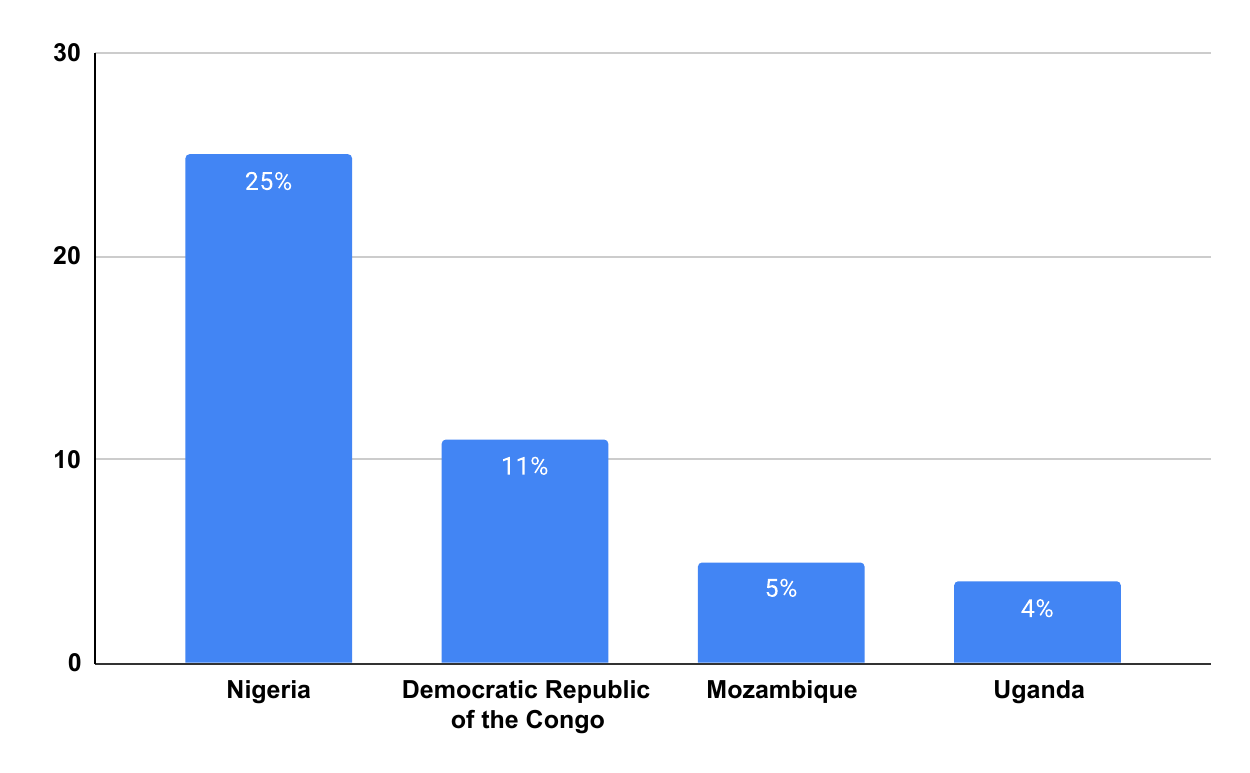} 
    \caption{In 2017, four countries from Africa accounted for 45\% of all malaria cases worldwide.}
    \label{bar11}
\end{figure}

In the study by Ogueda et al. \cite{ref1}, a variant of the physics-informed neural network (PINN), termed the disease-informed neural network (DINN), was employed as a deep learning model. The SIRD compartmental model was utilized to analyze disease dynamics, incorporating the movement of individuals between cities. The primary objective of this work was to predict various parameters, including the rates of transmission, mortality, and recovery for the selected cities, as well as the rate of movement of individuals between them. In  Schiassi et al. \cite{ref6} work, the principles of physics-informed learning were integrated with the theory of functional connections. This approach was applied to three different compartmental models SIR, SIER, and SIERS to predict key metrics such as the transmission rate, recovery rate, and reproduction number. Ning et al. \cite{ref7} utilized PINN alongside the SEIRD compartmental model to calculate a one-week trajectory based on COVID-19 data for Italy. This study also predicted parameters including the transmission rate, recovery rate, and death rate from February 20 to June 25. In another study by Ning et al. \cite{ref8}, a variant of the PINN known as Euler iteration-augmented physics-informed neural networks was implemented in conjunction with the SIRD compartmental model. This approach aimed to predict various transmission rates, including the transmission rate and death rate, while also providing forecasts for 3-day, 5-day, and 7-day intervals. 

Deng et al. \cite{ref2} integrated an RNN-based LSTM deep learning model with the SIRJD compartmental model to predict the transmission rate of COVID-19 in the United States, forecasting disease trajectories for the next 35 and 42 days. Bousquet et al. \cite{ref3} applied an LSTM-based neural network to the SIRD model, predicting transmission and death rates and generating a 10-week trajectory for France, the UK, Germany, and South Korea. Deng et al. \cite{ref4} used DNN and LSTM with the SIRD model for the Omicron variant, predicting transmission parameters and a 28-day trajectory in cities like Shanghai and Hong Kong. Deng and Wang \cite{a} proposed a deep learning approach combining DNN and LSTM to estimate parameters and predict infections and deaths over 28 days, achieving 98 \% accuracy for infections and 92 \% for deaths. For more deep learning and statistical models, we refer to \cite{b, c, d, e, f, g, h, i, j, ref5}.

In the study by Bhuju et al. \cite{ref9}, the temperature dependence of the transmission rate was analyzed using the SEIR model for humans and the LSEI model for mosquitoes. The authors conducted various mathematical analyses, including the stability of disease-free equilibrium and the existence of endemic equilibrium. Numerical simulations across different temperature scenarios revealed that temperature significantly influences the transmission rate.

Keno et al. \cite{ref10} examined the temperature dependency of the transmission parameter using the SIR model for humans and the SI model for mosquitoes. Their analysis includes assessments of both local and global stability of equilibrium points. The study demonstrated that when the basic reproduction number is less than one, the disease-free equilibrium is both locally and globally asymptotically stable. Additionally, the impact of temperature on transmission dynamics was investigated, reinforcing the conclusion that temperature plays a critical role in disease transmission.

In the study by Proctor et al. \cite{ref27}, Dynamic Mode Decomposition (DMD) was utilized to incorporate control effects and extract low-order models from high-dimensional, complex systems. Alla and Kutz \cite{ref28} implemented DMD to reduce the order of a nonlinear dynamical system. Similarly, Andreuzzi et al. \cite{ref29} extended DMD for forecasting future states of parametric dynamical systems. Watson et al. \cite{ref12} employed a Bayesian time series model in conjunction with random forests to predict the number of cases and deaths using the SIRD compartmental model, conducting a 21-day forecast for three cities: New York, Colorado, and West Virginia. Additional research on mathematical models of malaria can be found in \cite{ref11, ref13, ref14, ref15, ref16, ref17, ref18, ref19}.

Most existing studies on malaria transmission dynamics primarily rely on mathematical modeling. In contrast, this work leverages deep learning methods to analyze the dynamics of malaria transmission. One key advantage of using the neural network approach is that these models are designed to emulate the human brain, allowing them to capture complex patterns in data. This capability makes deep learning particularly well-suited for modeling malaria dynamics.

In this study, a simple Artificial Neural Network (ANN) is employed to predict the trajectories of all five compartments of the model. To estimate parameters related to malaria transmission, ANN, Convolutional Neural Networks (CNN), and Recurrent Neural Networks (RNN) are utilized. The use of RNN is particularly advantageous, as it can forecast future values by storing extensive historical data. Additionally, Dynamic Mode Decomposition (DMD) is applied to assess the risk of the disease; DMD is effective in extracting insights from raw data, which distinguishes it from other deep learning methods. The primary factor influencing disease transmission is the transmission rate, which is analyzed in this work while considering both temperature and altitude simultaneously.

This study is organized as follows: Section 2 formulates the model related to disease transmission. Section 3 presents the problem statement, methodology, and results. Finally, Section 4 offers concluding remarks and outlines future directions.
\mysection{Model formulation} 
In this section, we formulate our model, which encompasses both human and mosquito populations, as malaria involves interactions between these two species. This methodology employs compartment models to analyze the dynamics of humans and mosquitoes. The human population is divided into three compartments: susceptible (S), infected (I), and recovered (R). In contrast, the mosquito population is divided into two compartments: susceptible (S) and infected (I).

Given the two-species nature of this system, the population of one species is influenced by the other. Specifically, when infected mosquitoes come into contact with susceptible humans, the number of infected humans increases. Similarly, when infected humans interact with susceptible mosquitoes, the population of infected mosquitoes rises. Figure \ref{2} illustrates the interactions between the human and mosquito populations.\\
 
\begin{figure}[H]
    \centering
    \includegraphics[scale=0.30]{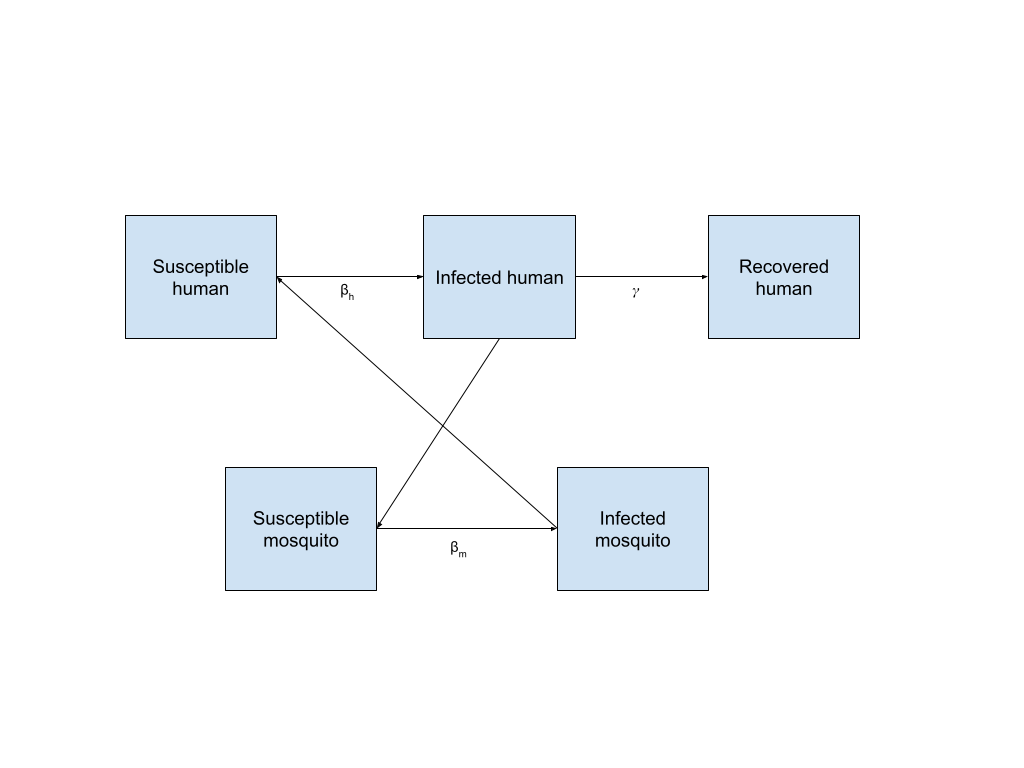} 
    \caption{Schematic diagram of SIR-SI system }
    \label{2}
\end{figure}

For the human population, we consider the standard SIR model, and for the mosquito population, we consider the SI model, the following is the system of differential equations.\\
\begin{eqnarray} \nonumber
\frac{dS_{h}}{dt} &=& \Gamma_{h}N_{h}-\frac{\beta_{h}S_{h}I_{m}}{N_{h}}- \mu_{h}S_{h}  \\ \nonumber 
\frac{dI_{h}}{dt} &=& \frac{\beta_{h}S_{h}I_{m}}{N_{h}} - (\gamma_{h} + \mu_{h})I_{h}\\ \label{1.1} 
\frac{dR_{h}}{dt} &=& \gamma_{h} I_{h} -\mu_{h} R_{h}\\ \nonumber
\frac{dS_{m}}{dt} &=& \Gamma_{m}N_{m}-\frac{\beta_{m}S_{m}I_{h}}{N_{m}}- \mu_{m}S_{m} \\ \nonumber
\frac{dI_{m}}{dt} &=& \frac{\beta_{m}S_{m}I_{h}}{N_{m}} -  \mu_{m}I_{m}. \nonumber
\end{eqnarray}

The model used in this work is the SIR-SI model. The SIR model is used for the human population and the SI model is used for the mosquito population. Table ~\ref{3} and ~\ref{4} explain the compartments and the parameters used. \\
 
\begin{table}[H]
    \centering
    \caption{Interpretation of compartments}
    \vspace{0.1cm}
    \begin{tabular}{|c|p{10cm}|} 
        \hline
        Compartment Symbol & Explanation \\
        \hline
        $S_{h}$  &  It represents the section of the human population who are susceptible to malaria \\
        \hline
        $I_{h}$ &  It represents the section of humans who are infected with malaria  \\
        \hline
        $R_{h}$ & It represents the section of humans who have recovered from malaria \\
        \hline 
        $S_{m}$ & It represents the section of mosquitoes that are susceptible to the malaria-causing parasite \\
        \hline 
        $I_{m}$ & It represents the section of mosquitoes that are infected with the malaria-causing parasite \\
        \hline 
    \end{tabular}
    \label{3}
\end{table}

\begin{table}[h]
    \centering
    \caption{Parameters description} \vspace{0.2 cm}
    \begin{tabular}{|c|c|}
        \hline
        Compartment Symbol & Explanation  \\
        \hline
        $\Gamma_{h} $  &  Birth rate of the humans \\
        \hline
        $N_{h}$ &  Total human population  \\
        \hline
        $\beta_{h}$ & Transmission of malaria in humans \\
        \hline 
        $\mu_{h}$ & mortality rate of humans due to malaria \\
        \hline 
        $\gamma_{h}$ & recovery rate of humans \\
        \hline 
        $\Gamma_{m}$ & Birth rate of mosquitoes\\
        \hline 
        $\beta_{m}$ & transmission of malaria in mosquitoes \\
        \hline 
        $\mu_{m}$ & mortality rate of mosquitoes \\
        \hline 
        $N_{m}$ & Total population of mosquitoes \\
        \hline 
    \end{tabular}
    
    \label{4}
\end{table}

\subsection{Disease-free steady state analysis}
In this section, we perform the steady state analysis. Steady-state solutions play an important role when analytical solution is not known and we want to study the qualitative properties of solutions. 

We define the basic reproduction number  for the model (\ref{1.1}) as \\
$$R_{0} = \frac{\beta_{h}\beta_{m}\Gamma_{h}\Gamma_{m}}{\mu_{h}\mu_{m}^{2}(\mu_{h}+\gamma_{h})}.$$
Observe that the basic reproduction number depends on recruitment rates, infection rates, recovery rates, and mortality rates.\\
\begin{Theorem} 
    If $R_{0}<1$, the disease-free steady state is locally stable. 
\end{Theorem}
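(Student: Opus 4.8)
The plan is to linearize the system (\ref{1.1}) about the disease-free steady state and apply the standard spectral (Routh--Hurwitz) criterion: the equilibrium is locally asymptotically stable precisely when every eigenvalue of the Jacobian evaluated there has negative real part. First I would identify the disease-free steady state itself by setting $I_h = I_m = 0$ and solving the remaining equations, which yields $S_h^\ast = \Gamma_h N_h/\mu_h$, $R_h^\ast = 0$, and $S_m^\ast = \Gamma_m N_m/\mu_m$.

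Next I would compute the $5\times 5$ Jacobian of the right-hand side of (\ref{1.1}) with respect to $(S_h, I_h, R_h, S_m, I_m)$ and evaluate it at this equilibrium. The key structural observation is that, because the infection terms are bilinear in the infected compartments, the linearization decouples: after reordering the variables so that $(I_h, I_m)$ come first, the Jacobian is block lower triangular. The diagonal block governing the susceptible and recovered variables is triangular with eigenvalues $-\mu_h$, $-\mu_h$, $-\mu_m$, all manifestly negative, so the stability question reduces to the $2\times 2$ block
$$A = \begin{pmatrix} -(\gamma_h + \mu_h) & \beta_h S_h^\ast/N_h \\ \beta_m S_m^\ast/N_m & -\mu_m \end{pmatrix}.$$

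For this block I would invoke the standard fact that both eigenvalues have negative real part if and only if $\operatorname{tr} A < 0$ and $\det A > 0$. The trace is $-(\gamma_h + \mu_h) - \mu_m$, which is automatically negative. The crux of the argument, and the step that ties everything to the hypothesis, is the determinant condition: substituting the equilibrium values $S_h^\ast$ and $S_m^\ast$ collapses the product of the off-diagonal entries to $\beta_h\beta_m\Gamma_h\Gamma_m/(\mu_h\mu_m)$, so that
$$\det A = (\gamma_h + \mu_h)\mu_m - \frac{\beta_h\beta_m\Gamma_h\Gamma_m}{\mu_h\mu_m} = (\gamma_h+\mu_h)\mu_m\bigl(1 - R_0\bigr),$$
using the definition of $R_0$. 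Hence $\det A > 0$ is exactly equivalent to $R_0 < 1$, and under this hypothesis all five eigenvalues of the Jacobian have negative real part, giving local asymptotic stability.

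The computation is routine once the block structure is spotted; the main thing to get right is the algebraic simplification in the last display, where the factors $N_h$ and $N_m$ must cancel cleanly against the equilibrium expressions for $S_h^\ast$ and $S_m^\ast$ so that the determinant reproduces the stated $R_0$ threshold. An equivalent route would be to derive $R_0$ independently via the next-generation matrix, whose spectral radius falling below one is known to coincide with local stability of the disease-free equilibrium; checking that this reproduces the same formula for $R_0$ would serve as a useful consistency test on the Jacobian calculation.
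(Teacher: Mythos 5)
Your proposal is correct and follows essentially the same route as the paper: both linearize the system at the disease-free equilibrium, note that three eigenvalues ($-\mu_h$, $-\mu_h$, $-\mu_m$) are manifestly negative, and reduce the question to the $2\times 2$ infected block, whose eigenvalues lie in the left half-plane exactly when $R_0<1$. The only cosmetic difference is that you conclude via the trace--determinant criterion for that block, while the paper writes out the roots of the corresponding quadratic factor of the characteristic polynomial explicitly; the two arguments are equivalent.
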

\begin{proof} 
For the analysis of the disease-free steady state, we need to equate the infected and recovered population of both species to zero and thus we get
$S_{h} = \frac{\Gamma_{h}}{\mu_{h}}$,
$I_{h} = 0$,
$R_{h} = 0$,
$S_{m} = \frac{\Gamma_{m}}{\mu_{m}}$,
$I_{m} = 0.$\\

Dividing the first three equations $N_{h}$ and the last two equations with $N_{m}$ in (\ref{1.1}), we get
\begin{eqnarray} \nonumber
\frac{dS_{h}}{dt} &=& \Gamma_{h} - m\beta_{h}S_{h}I_{m} - \mu_{h}S_{h} \\ \nonumber
\frac{dI_{h}}{dt} &=& m\beta_{h}S_{h}I_{m} -(\gamma_{h} - \mu_{h})I_{h} \\ \label{1.2}
\frac{dR_{h}}{dt} &=& \gamma_{h}I_{h} - \mu_{h}R_{h} \\ \nonumber
\frac{dS_{m}}{dt} &=& \Gamma_{m} - \frac{\beta_{m}S_{m}I_{h}}{m} - \mu_{m}S_{m} \\ \nonumber
\frac{dI_{m}}{dt} &=& \frac{\beta_{m}S_{m}I_{h}}{m} - \mu_{m}I_{m}.
\end{eqnarray}
Now our task is to linearize the system (\ref{1.2}) around the disease-free steady state. After linearization, we obtain the following system 
\begin{eqnarray*}
\frac{dS_{h}}{dt} &=& \Gamma_{h} - m\beta_{h}\frac{\Gamma_{h}}{\mu_{h}}I_{m} - \mu_{h}S_{h} \\
\frac{dI_{h}}{dt} &=& m\beta_{h}\frac{\Gamma_{h}}{\mu_{h}}I_{m} - (\gamma_{h} + \mu_{h})I_{h}\\
\frac{dR_{h}}{dt} &=& \gamma_{h}I_{h} - \mu_{h}R_{h}\\
\frac{dS_{m}}{dt} &=&\Gamma_{m} - \beta_{m}\frac{\Gamma_{m}}{m\mu_{m}}I_{h} - \mu_{m}S_{m}  \\
\frac{dI_{m}}{dt} &=& \frac{\Gamma_{m}\beta_{m}I_{h}}{m\mu_{m}} - \mu_{m}I_{m}.
\end{eqnarray*}
Constructing the Jacobian matrix, we get 
\[ 
J = 
\begin{bmatrix}
    -\mu_{h} & 0 & 0 & 0 & -\frac{m\beta_{h}\Gamma_{h}}{\mu_{h}} \\
    0 & -(\gamma_{h}+ \mu_{h}) & 0 & 0 &  \frac{m\beta_{h}\Gamma_{h}}{\mu_{h}}\\
    0 & \gamma_{h} & -\mu_{h} & 0 & 0 \\
    0 & \frac{-\beta_{m}\Gamma_{m}}{m\mu_{m}} & 0 & -\mu_{m} & 0 \\
    0 & \frac{\beta_{m}\Gamma_{m}}{m\mu_{m}}& 0 & 0 & -\mu_{m} \\
\end{bmatrix}.
\]

The characteristic polynomial of the above matrix is 
$$f(\lambda) = -(\lambda + \mu_{h})(\lambda + \mu_{h})(\lambda + \mu_{m}) [(\lambda+ \mu_{m})(\lambda  + \mu_{h} + \gamma_{h}) - \frac{\beta_{h}\beta_{m}\Gamma_{h}\Gamma_{m}}{\mu_{m}\mu_{h}}].$$

For the system to be stable, all the eigenvalues must have negative real parts. It is clear that three of the eigenvalues are negative; we need to check the roots of the quadratic polynomial for the remaining two eigenvalues. By solving the quadratic equation, we get the roots as \\
 
$$\frac{-(\gamma + \mu_{h}+\mu_{m}) + \sqrt{(\mu_{m}+ \mu_{h} + \gamma_{h})^{2} - 4(\mu_{m}(\mu_{h}+ \gamma_{h}) - \frac{\beta_{h}\beta_{m}\Gamma_{h}\Gamma_{m}}{\mu_{h}\mu_{m}}) }}{2}$$

$$\frac{-(\gamma + \mu_{h}+\mu_{m}) - \sqrt{(\mu_{m}+ \mu_{h} + \gamma_{h})^{2} - 4(\mu_{m}(\mu_{h}+ \gamma_{h}) - \frac{\beta_{h}\beta_{m}\Gamma_{h}\Gamma_{m}}{\mu_{h}\mu_{m}}) }}{2}.$$
 
If we observe, the second root is always negative, and thus we need to find out the condition for which the first root is negative and that is 
 
$$\mu_{m}( \mu_{h} + \gamma_{h}) - \frac{\beta_{m}\beta_{h}\Gamma_{m}\Gamma_{h}}{\mu_{h}\mu_{m}}> 0. $$
 
Rearranging the above inequality, we get the expression \\
 
$$\frac{\beta_{h}\beta_{m}\Gamma_{h}\Gamma_{m}}{\mu_{h}\mu_{m}^{2}(\mu_{h}+\gamma_{h})}< 1.$$
So, if $R_{0}<1,$ it will ensure us that the disease-free steady state is stable. 
\end{proof}

\subsection{Endemic steady state analysis }
In this subsection, we aim to study the stability of the endemic steady-state. \\
Given system of equations is 
\begin{eqnarray*}
\frac{dS_{h}}{dt} &=& \Gamma_{h} - m\beta_{h}\frac{\Gamma_{h}}{\mu_{h}}I_{m} - \mu_{h}S_{h}\\
\frac{dI_{h}}{dt} &=& m\beta_{h}\frac{\Gamma_{h}}{\mu_{h}}I_{m} - (\gamma_{h} + \mu_{h})I_{h}\\
\frac{dR_{h}}{dt} &=& \gamma_{h}I_{h} - \mu_{h}R_{h}\\
\frac{dS_{m}}{dt} &=& \Gamma_{m} - \beta_{m}\frac{\Gamma_{m}}{m\mu_{m}}I_{h} - \mu_{m}S_{m}  \\
\frac{dI_{m}}{dt} &=& \frac{\Gamma_{m}\beta_{m}I_{h}}{m\mu_{m}} - \mu_{m}I_{m}.
\end{eqnarray*}
 
Before proceeding with the next theorem, it is essential to define the following quantities \\
 
 $b =\mu_{h} + \frac{\beta_{h}I_{m}^{*}}{N_{h}}+ \mu_{h} + \gamma + \frac{\beta_{h}I_{h}^{*}}{N_{m}} + \mu_{m}$\\
  
$c =(\mu_{h} + \frac{\beta_{h}I_{m}^{*}}{N_{h}})(\mu_{h} + \gamma) + (\mu_{h} + \gamma)(\frac{\beta_{h}I_{h}^{*}}{N_{m}} + \mu_{m})+(\frac{\beta_{h}I_{h}^{*}}{N_{m}} + \mu_{m})(\mu_{h} + \frac{\beta_{h}I_{m}^{*}}{N_{h}})- \frac{\beta_{m}\beta_{h}S_{m}^{*}S_{h}^{*}}{N_{m}N_{h}}$\\
 
$d = \left(\mu_{h} + \frac{\beta_{h}I_{m}^{*}}{N_{h}}\right)\left[(\mu_{h} + \gamma)(\frac{\beta_{h}I_{h}^{*}}{N_{m}} + \mu_{m})-\frac{\beta_{m}\beta_{h}S_{m}^{*}S_{h}^{*}}{N_{m}N_{h}} + \frac{\beta_{m}\beta_{h}^{2}S_{m}^{*}S_{h}^{*}I_{m}^{*}}{N_{m}N_{h}^{*}}\right],$\\
 
where  $S^{*}_{m},S^{*}_{h},I^{*}_{m},I^{*}_{h}$ are the non trivial equilibrium solutions. 

\begin{Theorem}
    If $R_{0}>1$ and $b, c, d, bc-d >0$, the endemic steady state is locally stable. 
\end{Theorem}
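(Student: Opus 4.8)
The plan is to prove local asymptotic stability of the endemic equilibrium by linearizing the system about the non-trivial fixed point $(S_h^*, I_h^*, R_h^*, S_m^*, I_m^*)$ and applying the Routh--Hurwitz criterion to the characteristic polynomial of the Jacobian. The very form of the quantities $b$, $c$, $d$ signals the intended route: they are precisely the coefficients of a cubic $\lambda^3 + b\lambda^2 + c\lambda + d$, so the theorem reduces to (i) showing that the linearized problem collapses to this cubic once the trivially stable directions are removed, and (ii) invoking the classical sign conditions.

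First I would determine the endemic equilibrium by setting the right-hand sides of the system to zero and solving for $S_h^*, I_h^*, R_h^*, S_m^*, I_m^*$ in terms of the model parameters. Here the hypothesis $R_0 > 1$ plays its one essential role: it is the condition guaranteeing that these equilibrium components are strictly positive, hence that a genuine endemic state exists in the biologically relevant region. The stability itself will come entirely from the remaining hypotheses on $b, c, d$.

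Next I would assemble the $5 \times 5$ Jacobian evaluated at the endemic equilibrium. Because the recovered equation $\dot R_h = \gamma_h I_h - \mu_h R_h$ receives input from $I_h$ but feeds back into no other compartment, the variable $R_h$ decouples and contributes a factor $(\lambda + \mu_h)$, i.e.\ a manifestly negative eigenvalue; a second linear factor with negative root is expected to split off similarly, leaving a $3 \times 3$ block whose characteristic polynomial is exactly $\lambda^3 + b\lambda^2 + c\lambda + d$. Verifying that the determinant of this block expands into the advertised coefficients $b$, $c$, $d$ — in particular that the cross-coupling terms $\frac{\beta_m \beta_h S_m^* S_h^*}{N_m N_h}$ between the human and mosquito infected compartments assemble exactly as written — is the computational heart of the argument.

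Finally I would apply the Routh--Hurwitz criterion for a monic cubic: every root of $\lambda^3 + b\lambda^2 + c\lambda + d$ has negative real part if and only if $b > 0$, $d > 0$, and $bc - d > 0$ (note these already force $c > 0$). Since these are exactly the standing hypotheses, the cubic block is stable, and combined with the negative factors peeled off earlier every eigenvalue of the full Jacobian has negative real part, establishing local asymptotic stability. The main obstacle I anticipate is purely the algebraic bookkeeping in the previous step — tracking how the reduced determinant collapses to the stated cubic so that its coefficients coincide with $b$, $c$, $d$ — whereas the concluding stability argument is immediate once the coefficients are matched.
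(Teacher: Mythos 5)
Your proposal matches the paper's proof essentially step for step: both compute the endemic equilibrium, linearize, factor the characteristic polynomial of the $5\times 5$ Jacobian into $-(\lambda+\mu_{h})(\lambda+\mu_{m})$ times a cubic whose coefficients are exactly $b$, $c$, $d$, and conclude via the Routh--Hurwitz conditions for that cubic. If anything, your formulation of the cubic criterion as a sufficient (indeed equivalent) condition is the logically correct direction, whereas the paper's auxiliary lemma is worded as giving only \emph{necessary} conditions yet is used as if sufficient.
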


\begin{proof} 
The equilibrium points will be obtained by equating all of the above time derivatives to zero and by solving them for non-trivial solutions, we obtain the solutions as 
\begin{eqnarray*}
S_{h}^{*} &=& \frac{N_{h}(\gamma + \mu_{h})(\mu_{m}N_{m}+ \frac{\Gamma_{h}N_{h}\beta_{m}}{\gamma + \mu_{h}})}{\beta_{m}(\mu_{h}N_{h}+ \frac{\Gamma_{m}N_{m}\beta_{h}}{\mu_{m}})} \\
I_{h}^{*} &=& \frac{N_{h}N_{m}\mu_{m}\mu_{h}(R_{0}-1)}{\beta_{m}(\mu_{h}N_{h}+ \frac{\Gamma_{m}N_{m}\beta_{h}}{\mu_{m}})} \\
R_{h}^{*} &=& \frac{N_{h}N_{m}\mu_{m}\gamma_{h}(R_{0}-1)}{\beta_{m}(\mu_{h}N_{h}+ \frac{\Gamma_{m}N_{m}\beta_{h}}{\mu_{m}})} \\
 S_{m}^{*} &=& \frac{N_{m}\mu_{m}(\mu_{h}N_{h}+ \frac{\Gamma_{m}N_{m}\beta_{h}}{\mu_{m}})}{\beta_{h}(\mu_{m}N_{m}+ \frac{\Gamma_{h}N_{h}\beta_{m}}{\gamma + \mu_{h}})}\\ 
I_{m}^{*} &=& \frac{N_{h}N_{m}\mu_{m}\mu_{h}(R_{0}-1)}{\beta_{h}(\mu_{m}N_{m}+ \frac{\Gamma_{h}N_{h}\beta_{m}}{\gamma + \mu_{h}})}. 
\end{eqnarray*}
Linearizing  the model (\ref{1.2})  around the endemic equilibrium point, we get the following system
\begin{eqnarray*}
\frac{dS_{h}}{dt} &=& \Gamma_{h}N_{h}-\frac{\beta_{h}}{N_{h}}(S_{h}^{*}I_{m} + I_{m}^{*}S_{h}- S_{h}^{*}I_{m}^{*})- \mu_{h}S_{h}\\
\frac{dI_{h}}{dt} &=&\frac{\beta_{h}}{N_{h}}(S_{h}^{*}I_{m} + I_{m}^{*}S_{h}-S_{h}^{*}I_{m}^{*})- \gamma I_{h} -\mu_{h}I_{h} \\
\frac{dR_{h}}{dt} &=& \gamma I_{h} - \mu_{h}R_{h} \\
\frac{dS_{m}}{dt} &=& \gamma_{m}N_{m}-\frac{\beta_{m}}{N_{m}}(S_{m}^{*}I_{h}+ S_{m} I_{h}*) - \mu_{m}S_{m} \\
\frac{dI_{m}}{dt} &=& \frac{\beta_{m}}{N_{m}}(S_{m}^{*}I_{h}+ S_{m} I_{h}*) - \mu_{m}I_{m}.
\end{eqnarray*}
Writing the Jacobian matrix, we get \\
\[
J = 
\begin{bmatrix}
     -\frac{\beta_{h}I_{m}^{*}}{N_{h}}-\mu_{h} & 0 & 0 & 0 & -\frac{\beta_{h}S_{h}^{*}}{N_{h}}\\
    \frac{\beta_{h}I_{m}^{*}}{N_{h}} & -(\gamma_{h}+ \mu_{h}) & 0 & 0 & \frac{\beta_{h}S_{h}^{*}}{N_{h}} \\
    0 & \gamma_{h} & -\mu_{h} & 0 & 0 \\
    0 & -\frac{\beta_{h}S_{m}^{*}}{N_{m}} & 0 & -\frac{\beta_{h}S_{m}^{*}}{N_{m}}-\mu_{m} & 0 \\
    0 &\frac{\beta_{h}S_{m}^{*}}{N_{m}} & 0 &\frac{\beta_{h}S_{m}^{*}}{N_{m}}  & -\mu_{m} \\
\end{bmatrix}
\]
The characteristic polynomial of the above matrix is \\
\[
f(\lambda) = -\left(\lambda + \mu_{h}\right)(\lambda + \mu_{m}) \left( \left[\lambda + \mu_{h} + \frac{\beta_{h}I_{m}^{*}}{N_{h}} \right]\left[ \left(\lambda + \mu_{h}+\gamma\right) \left(\lambda + \frac{\beta_{m}I_{h}^{*}}{N_{m}}+\mu_{m} \right)-\frac{\beta_{m}\beta_{h}S_{m}^{*}S_{h}^{*}}{N_{h}N_{m}} \right] + \frac{\beta_{h}^{2}\beta_{m}S_{h}^{*}I_{m}^{*}S_{m}^{*}}{N_{h}^{2}N_{m}} \right).\]

From the above equation, we can see that we have two linear factors and a cubic factor. To analyze cubic factor, let us state the following lemma:
\begin{Lemma}
Let $f(x) = ax^3 + bx^2 + cx + d$ be a cubic polynomial. For $f(x)$ to have all negative roots or complex roots with negative real parts, the following conditions are necessary:\\
$$a  > 0, b  > 0, c  > 0, d  >0, bc - ad  >0.$$
\end{Lemma} 
Now by using the above-stated lemma, we can arrive at our required result.
\end{proof} 
\subsection{Numerical validation of the stability theorems}
In the previous section, we observed that the disease-free steady state is achieved when $R_0 < 1$. Accordingly, we selected $R_0$ such that $R_0 < 1$, and as shown in Figure \ref{less than 1}, the infected populations of humans and mosquitoes diminish over time.  
\begin{figure}[H]
    \centering
    \includegraphics[scale = 0.45]{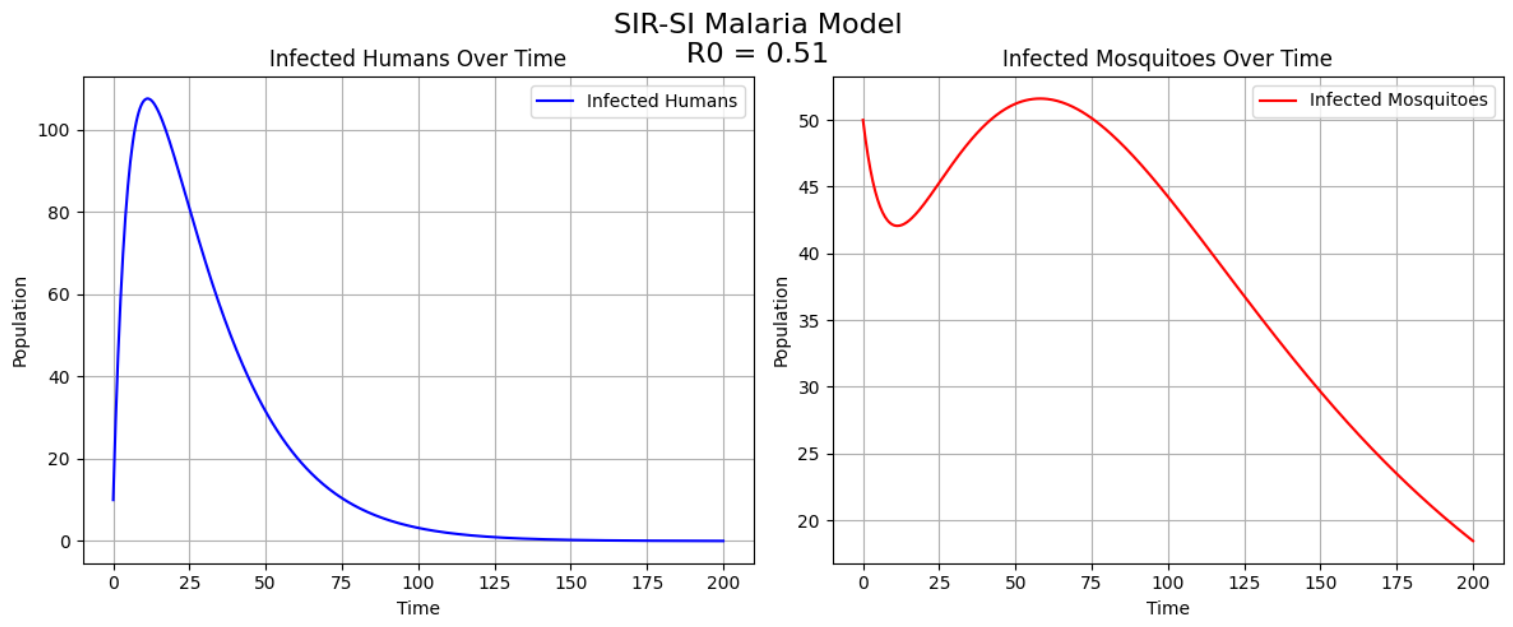}
    \caption{When reproduction number is less than 1}
    \label{less than 1}
\end{figure}

\subsection{Temperature and altitude dependence of the transmission rate}
The transmission rate of the malaria disease is heavily dependent on the altitude and temperature, and thus to consider both of the factors we write
\begin{equation} \label{3.1}
\beta(T,h) = \beta_{0}e^{-\frac{(T-25)^{2}}{\eta^{2}}}e^{-\frac{h^{2}}{\xi^{2}}}(1-e^{-\frac{h^{2}}{\xi^{2}}}),
\end{equation}
  
 where $\beta_{0}$ is the transmission constant of the region, $T$ is the temperature of the region, $h$ is the altitude, $\eta$ and $\xi$ are the constants associated with the region's temperature and height respectively.\\
  
 Here $e^{-\frac{(T-25)^{2}}{\eta^{2}}}$ is used to model the temperature and
 $e^{-\frac{h^{2}}{\xi^{2}}}(1-e^{-\frac{h^{2}}{\xi^{2}}})$ is used to model the height. Malaria transmission will be very minimal when the temperature is either extremely high or it is extremely low and thus to model this variation, the Gaussian function is used and the reason for shifting it by $25$ is because the optimum temperature for mosquito's existence and malaria transmission is $25\,^\circ\mathrm{C}$. Also, malaria transmission is completely zero when the altitude is zero since there won't be any mosquitoes in the sea and in the same way when the altitude is extremely high again the transmission is completely zero since there are no mosquitoes in the space and thus to model both of these conditions the negative exponential function is used in this manner.\\
 
Assuming the temperature is from $T_1$ to $T_2$ and the height is from $h_1$ to $h_2$, we can write 
\begin{equation} \label{3.2}
\beta_{avg} = \int_{T_{1}}^{T_{2}} \int_{h_{1}}^{h_{2}}e^{-\frac{(T-25)^{2}}{\eta^{2}}}e^{-\frac{h^{2}}{\xi^{2}}}(1-e^{-\frac{h^{2}}{\xi^{2}}})dh  dT.
\end{equation}
Here the effect of transmission rate is studied by changing the temperature values for a fixed height. The parameters considered are the following:\\
$\beta_{0} = 10 $,
 and for human we have taken $\eta=200 ,\xi = 20000$  and for mosquitoes, $\eta = 400 , \xi = 40000$. The different temperature values which we used are $25^\circ$, $30^\circ$, $35^\circ$, $40^\circ$, $45 ^\circ$.\\

The effect of temperature on the transmission rate while maintaining a height of 75 m can be observed in Figures~\ref{5} and~\ref{6}, while the effect at a height of 100 m is shown in Figures~\ref{7} and~\ref{8}, and at a height of 125 m in Figures~\ref{9} and~\ref{10}.\\

\begin{figure}[H]
    \centering
    \begin{subfigure}[b]{0.45\textwidth}
        \centering
        \includegraphics[width=\textwidth]{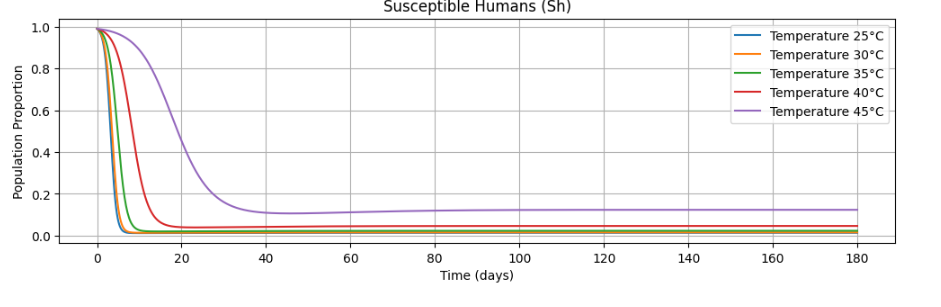}
        \caption{Susceptible human}
        \label{fig:subfiga1}
    \end{subfigure}
    \hfill
    \begin{subfigure}[b]{0.45\textwidth}
        \centering
        \includegraphics[width=\textwidth]{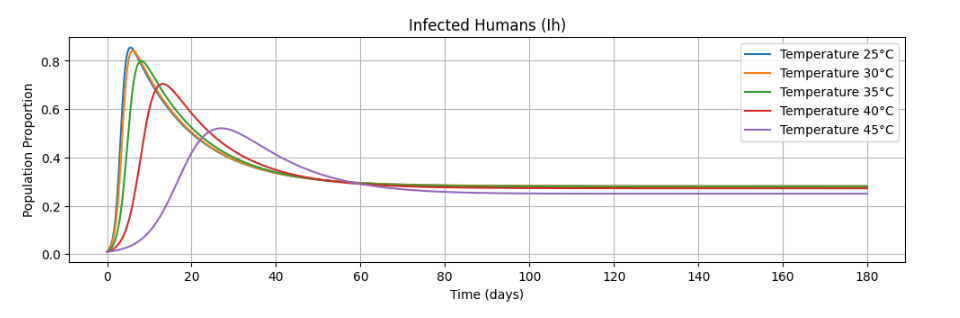}
        \caption{Infected human}
        \label{fig:subfigb2}
    \end{subfigure}
     \caption{Effect on human population when height is 75 m }
    \label{5}
\end{figure}
\begin{figure}[H]
    \centering
    \begin{subfigure}[b]{0.45\textwidth}
        \centering
        \includegraphics[width=\textwidth]{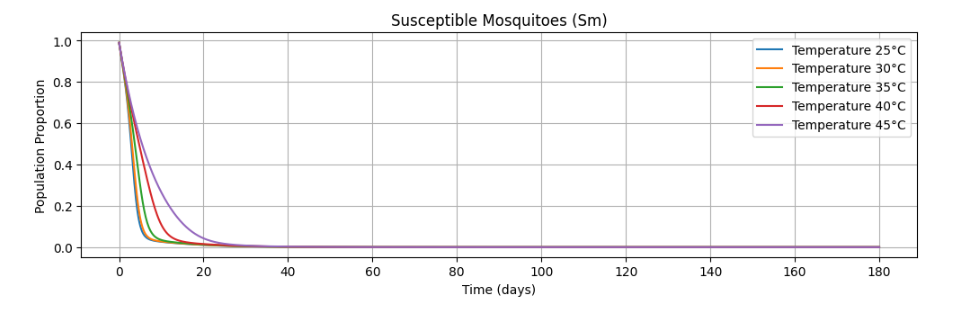}
        \caption{Susceptible mosquito}
        \label{fig:subfiga3}
    \end{subfigure}
    \hfill
    \begin{subfigure}[b]{0.45\textwidth}
        \centering
        \includegraphics[width=\textwidth]{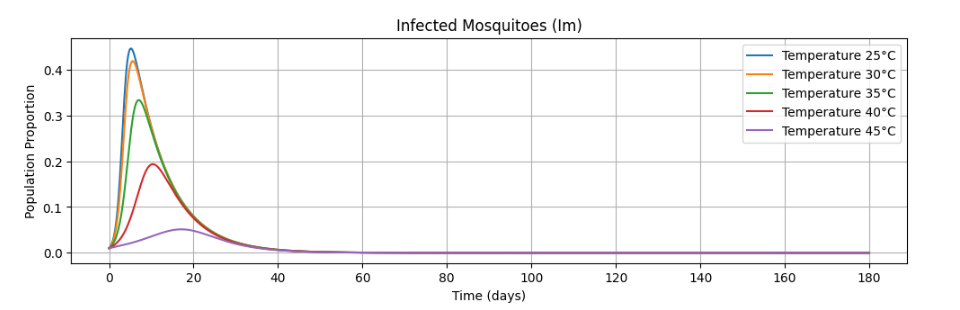}
        \caption{Infected mosquito}
        \label{fig:subfigb4}
    \end{subfigure}
    \caption{Effect on mosquitoes population when height is 75 m}
    \label{6}
\end{figure}

\begin{figure}[H]
    \centering
    \begin{subfigure}[b]{0.45\textwidth}
        \centering
        \includegraphics[width=\textwidth]{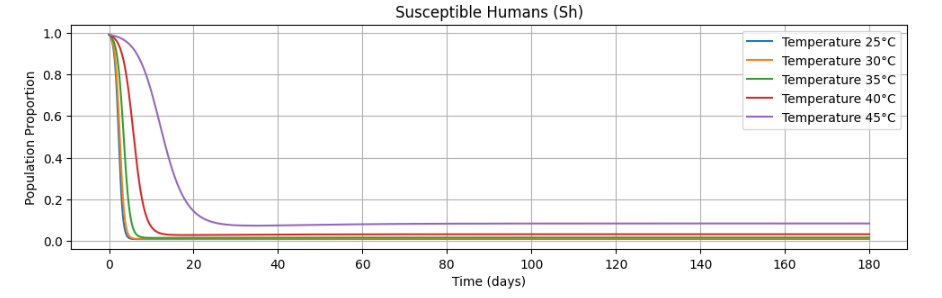}
        \caption{Susceptible human}
        \label{fig:subfiga5}
    \end{subfigure}
    \hfill
    \begin{subfigure}[b]{0.45\textwidth}
        \centering
        \includegraphics[width=\textwidth]{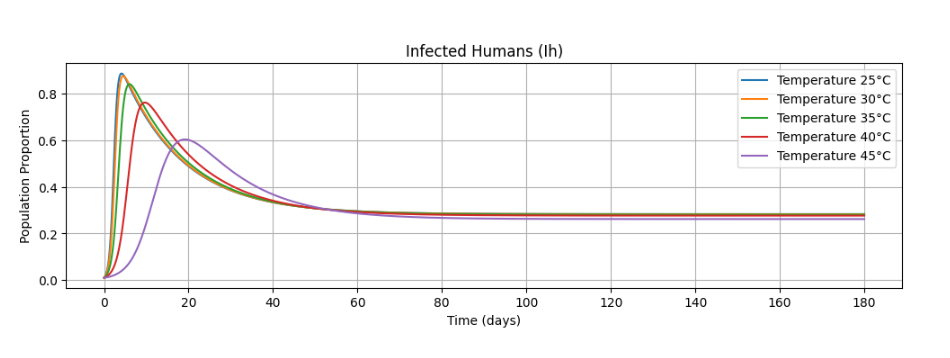}
        \caption{Infected human}
        \label{fig:subfigb6}
    \end{subfigure}
    \caption{Effect on human population when height is 100 m}
    \label{7}
\end{figure}
\begin{figure}[H]
    \centering
    \begin{subfigure}[b]{0.45\textwidth}
        \centering
        \includegraphics[width=\textwidth]{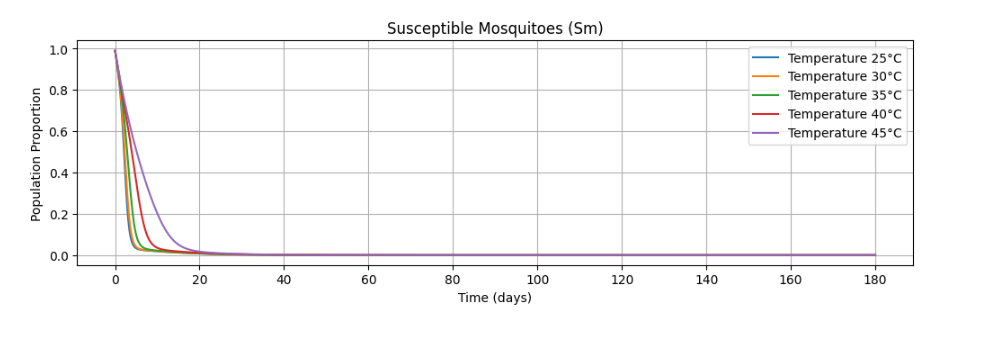}
        \caption{Susceptible mosquito}
        \label{fig:subfiga7}
    \end{subfigure}
    \hfill
    \begin{subfigure}[b]{0.45\textwidth}
        \centering
        \includegraphics[width=\textwidth]{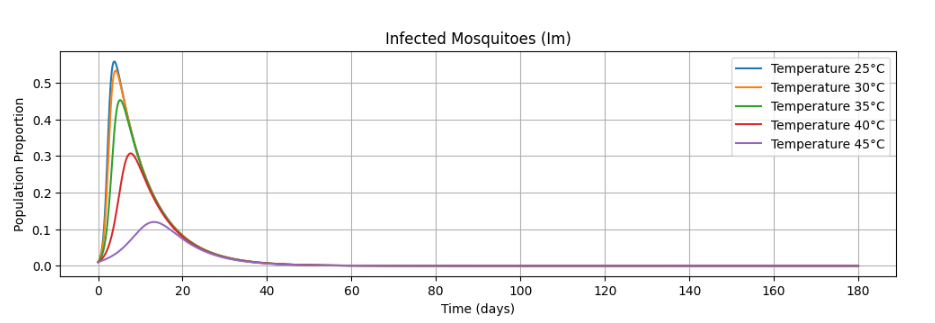}
        \caption{Infected mosquito}
        \label{fig:subfigb8}
    \end{subfigure}
    \caption{Effect on mosquitoes population when height is 100 m}
    \label{8}
\end{figure}
 
\begin{figure}[H]
    \centering
    \begin{subfigure}[b]{0.45\textwidth}
        \centering
        \includegraphics[width=\textwidth]{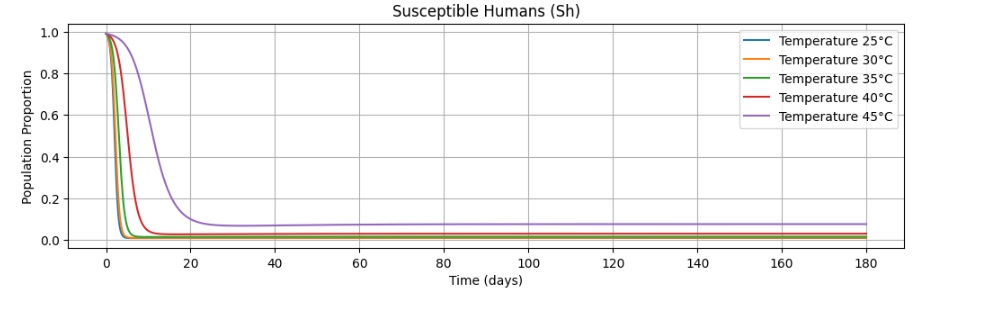}
        \caption{Susceptible human}
        \label{fig:subfiga9}
    \end{subfigure}
    \hfill
    \begin{subfigure}[b]{0.45\textwidth}
        \centering
        \includegraphics[width=\textwidth]{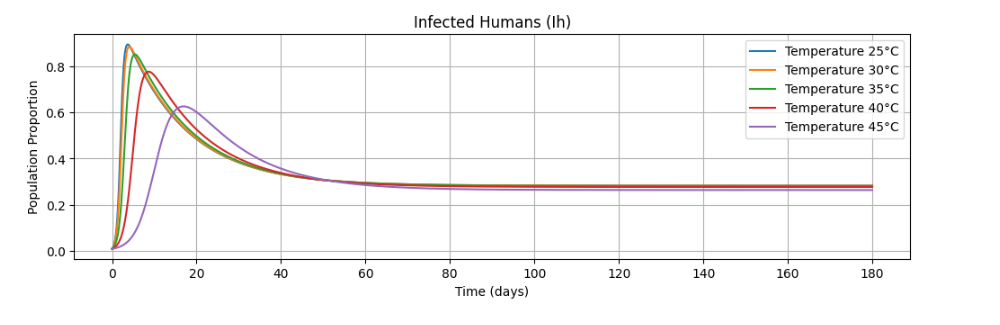}
        \caption{Infected human}
        \label{fig:subfigb10}
    \end{subfigure}
    \caption{Effect on human population when height is 125 m}
    \label{9}
\end{figure}
\begin{figure}[H]
    \centering
    \begin{subfigure}[b]{0.45\textwidth}
        \centering
        \includegraphics[width=\textwidth]{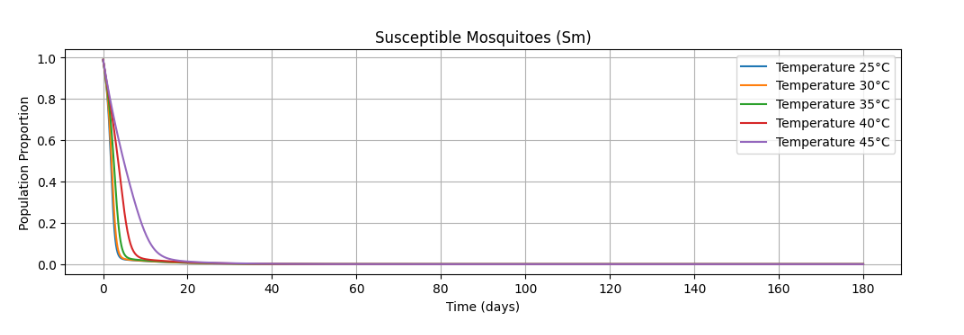}
        \caption{Susceptible mosquito}
        \label{fig:subfiga11}
    \end{subfigure}
    \hfill
    \begin{subfigure}[b]{0.45\textwidth}
        \centering
        \includegraphics[width=\textwidth]{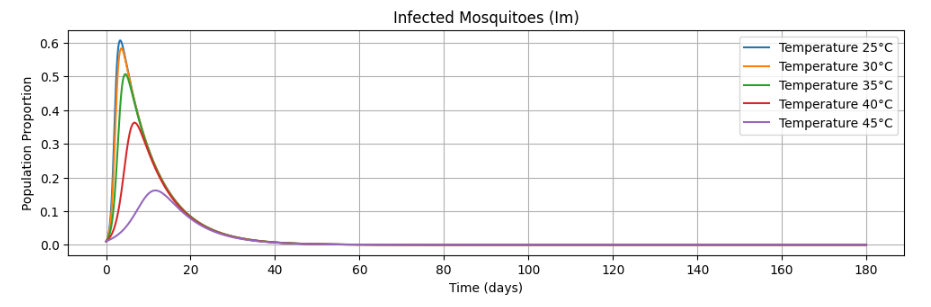}
        \caption{Infected mosquito}
        \label{fig:subfigb12}
    \end{subfigure}
    \caption{Effect on mosquito population when height is 125 m}
    \label{10}
\end{figure}
Thus in this graph, we have plotted the graphs for various temperatures and we can observe that the transmission rate is the maximum for $25^\circ$ which is the least of all the temperatures. Next, we verify if $25^\circ$ is the optimal temperature. We consider the following values for temperature $10^\circ$, $18^\circ$, $25^\circ$.
The effect of temperature on the transmission rate while maintaining a height of 75 m can be observed in Figures~\ref{11} and~\ref{12}, while at a height of 100 m, it can be observed in Figures~\ref{13} and~\ref{14}, and at a height of 125 m in Figures~\ref{15} and~\ref{16}.\\
 
\begin{figure}[H]
    \centering
    \begin{subfigure}[b]{0.45\textwidth}
        \centering
        \includegraphics[width=\textwidth]{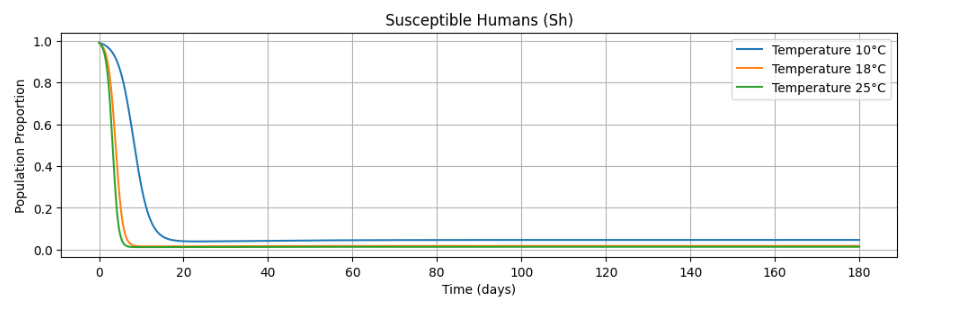}
        \caption{Susceptible human}
        \label{fig:subfiga13}
    \end{subfigure}
    \hfill
    \begin{subfigure}[b]{0.45\textwidth}
        \centering
        \includegraphics[width=\textwidth]{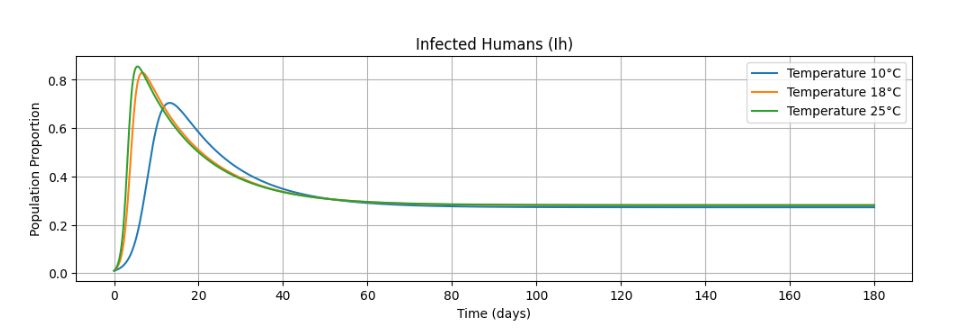}
        \caption{Infected human}
        \label{fig:subfigb14}
    \end{subfigure}
     \caption{Effect on human population when height is 75 m }
    \label{11}
\end{figure}
\begin{figure}[H]
    \centering
    \begin{subfigure}[b]{0.45\textwidth}
        \centering
        \includegraphics[width=\textwidth]{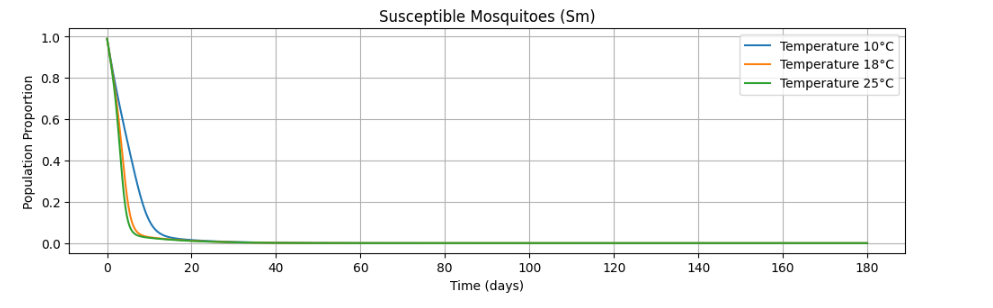}
        \caption{Susceptible mosquito}
        \label{fig:subfiga15}
    \end{subfigure}
    \hfill
    \begin{subfigure}[b]{0.45\textwidth}
        \centering
        \includegraphics[width=\textwidth]{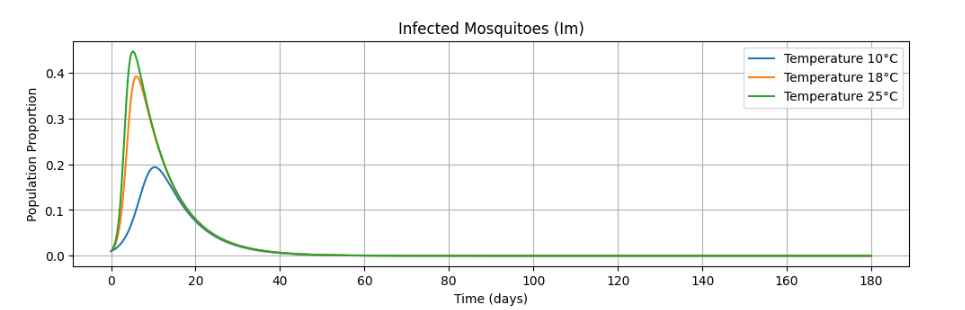}
        \caption{Infected mosquito}
        \label{fig:subfigb16}
    \end{subfigure}
    \caption{Effect on mosquito population when height is 75 m }
    \label{12}
\end{figure}

\begin{figure}[H]
    \centering
    \begin{subfigure}[b]{0.45\textwidth}
        \centering
        \includegraphics[width=\textwidth]{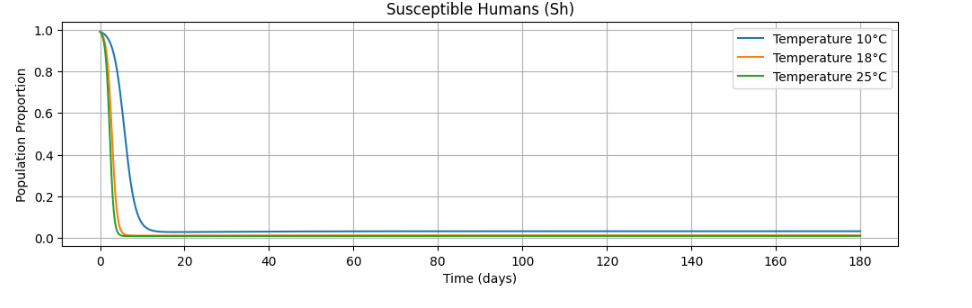}
        \caption{Susceptible human}
        \label{fig:subfiga17}
    \end{subfigure}
    \hfill
    \begin{subfigure}[b]{0.45\textwidth}
        \centering
        \includegraphics[width=\textwidth]{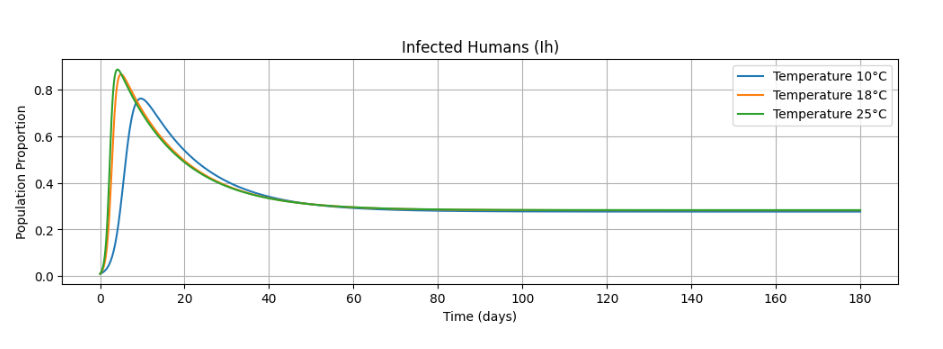}
        \caption{Infected human}
        \label{fig:subfigb18}
    \end{subfigure}
    \caption{Effect on human population when height is 100 m}
    \label{13}
\end{figure}
\begin{figure}[H]
    \centering
    \begin{subfigure}[b]{0.45\textwidth}
        \centering
        \includegraphics[width=\textwidth]{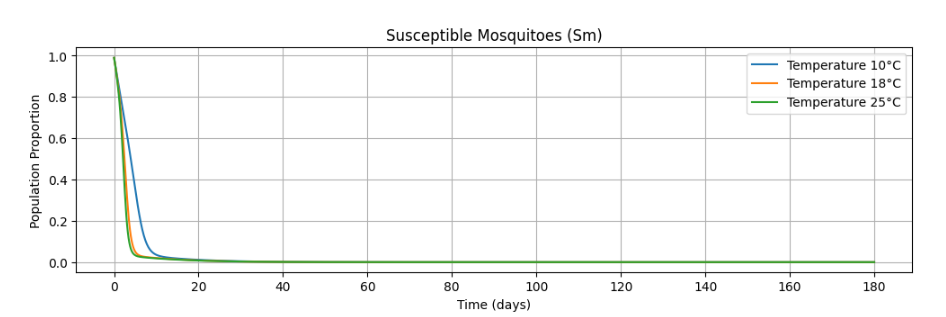}
        \caption{Susceptible mosquito}
        \label{fig:subfiga19}
    \end{subfigure}
    \hfill
    \begin{subfigure}[b]{0.45\textwidth}
        \centering
        \includegraphics[width=\textwidth]{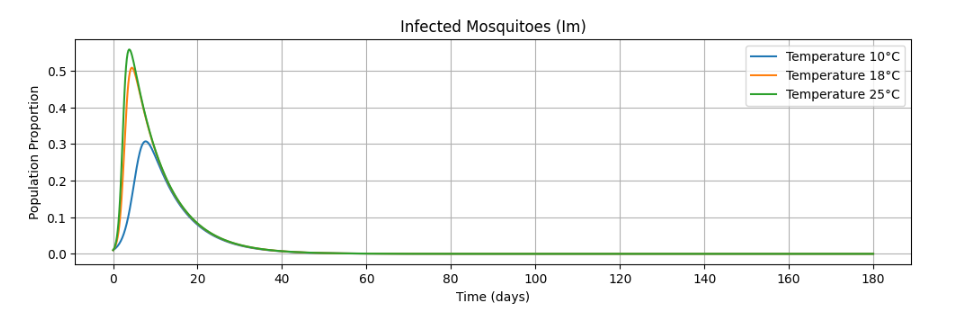}
        \caption{Infected mosquito}
        \label{fig:subfigb20}
    \end{subfigure}
    \caption{Effect on mosquito population when height is 100 m}
    \label{14}
\end{figure}
 
\begin{figure}[H]
    \centering
    \begin{subfigure}[b]{0.45\textwidth}
        \centering
        \includegraphics[width=\textwidth]{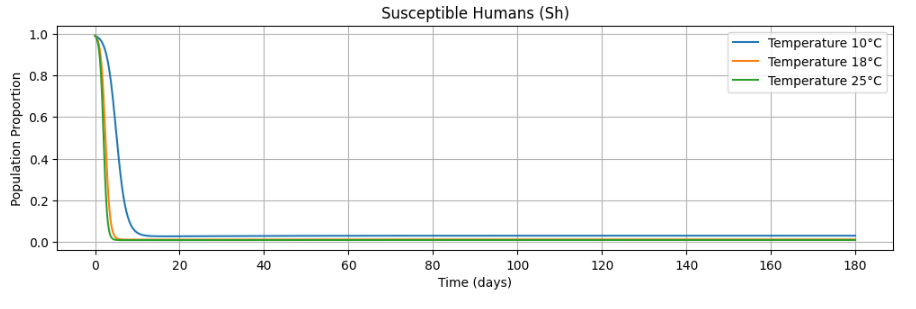}
        \caption{Susceptible human}
        \label{fig:subfiga21}
    \end{subfigure}
    \hfill
    \begin{subfigure}[b]{0.45\textwidth}
        \centering
        \includegraphics[width=\textwidth]{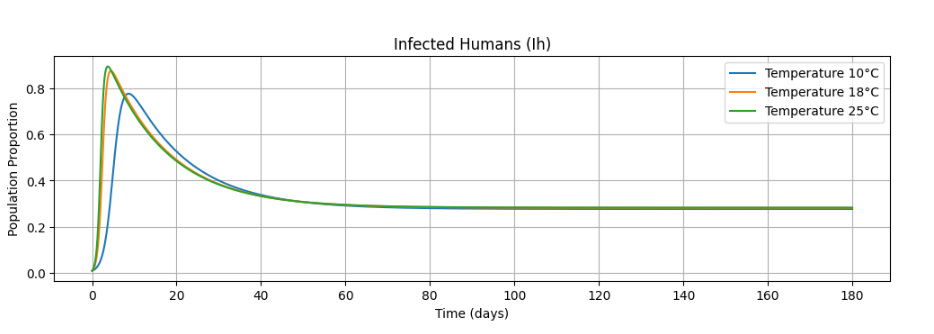}
        \caption{Infected human}
        \label{fig:subfigb22}
    \end{subfigure}
    \caption{Effect on human population  when height is 125 m}
    \label{15}
\end{figure}
\begin{figure}[H]
    \centering
    \begin{subfigure}[b]{0.45\textwidth}
        \centering
        \includegraphics[width=\textwidth]{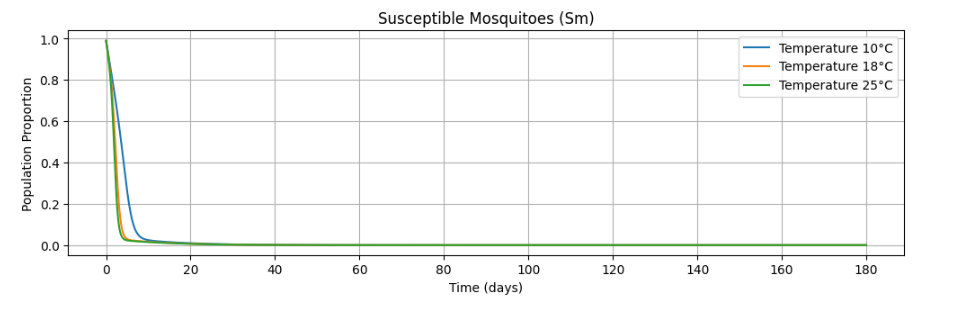}
        \caption{Susceptible mosquito}
        \label{fig:subfiga23}
    \end{subfigure}
    \hfill
    \begin{subfigure}[b]{0.45\textwidth}
        \centering
        \includegraphics[width=\textwidth]{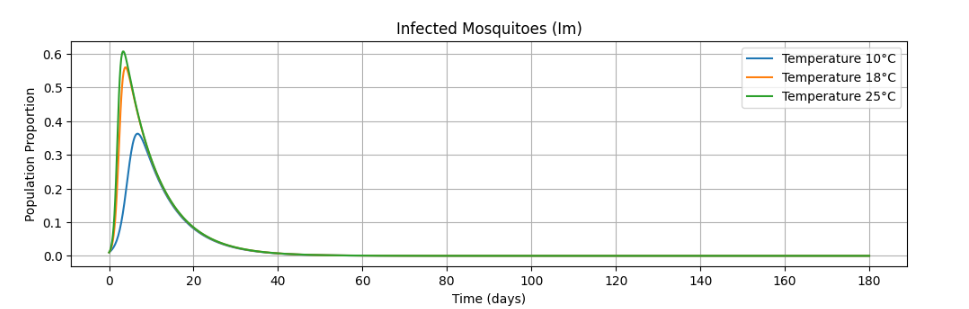}
        \caption{Infected mosquito}
        \label{fig:subfigb24}
    \end{subfigure}
    \caption{Effect on mosquito population when height is 125 m }
    \label{16}
\end{figure}
 
Again, we can observe that the transmission is the highest when the temperature is  $25^\circ$ which is the highest of all temperatures. Thus the threshold value of the temperature is  $25^\circ$ and the transmission rate is the highest only when the temperature is $25^\circ$. The main reason why this happens is that the temperature component is $e^{-\frac{(T-25)^{2}}{\eta^{2}}},$ and this component achieves the maximum value when  $T =25^\circ $.
 
Malaria transmission is also analyzed by keeping the temperature constant and varying the height.
Values of altitude used are $150, 170, 200, 220, ~\text{and}~ 250$ m. The effect of height on the transmission rate can be observed under different temperatures at $28^\circ$ in Figures~\ref{A} and~\ref{B}, at $35^\circ$ in Figures~\ref{C} and~\ref{D}, and at $42^\circ$ in Figures~\ref{E} and~\ref{F}.
 
\begin{figure}[H]
    \centering
    \begin{subfigure}[b]{0.45\textwidth}
        \centering
        \includegraphics[width=\textwidth]{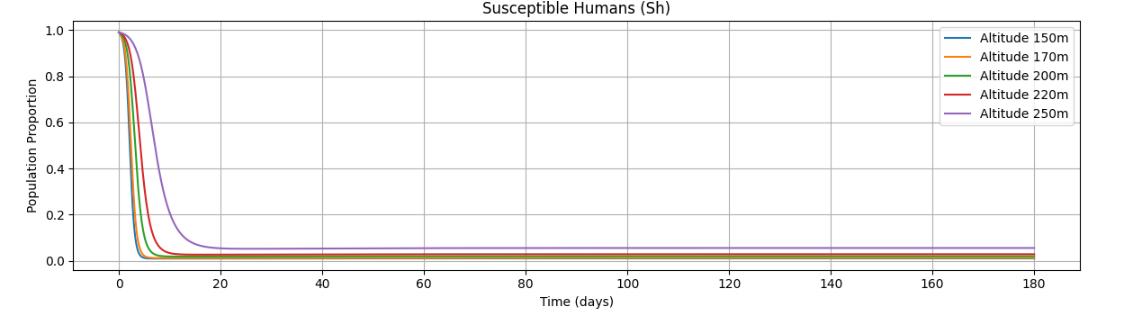}
        \caption{Susceptible humans}
        \label{fig:subfiga25}
    \end{subfigure}
    \hfill
    \begin{subfigure}[b]{0.45\textwidth}
        \centering
        \includegraphics[width=\textwidth]{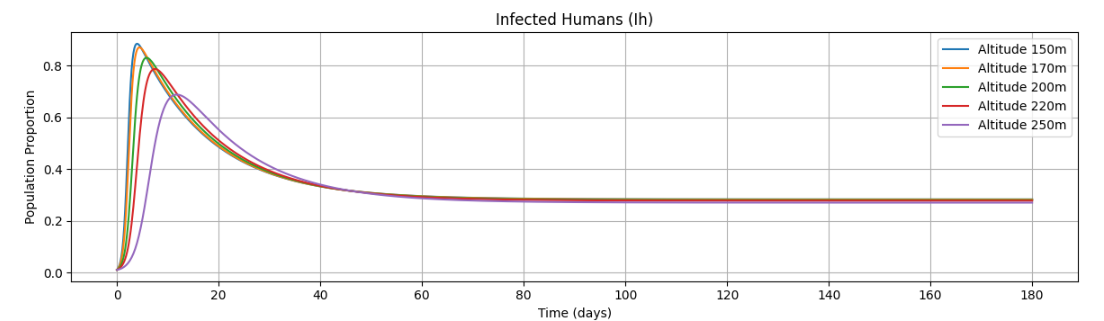}
        \caption{Infected human}
        \label{fig:subfigb26}
    \end{subfigure}
     \caption{Dynamics of human population when  temperature is \( 28 \) degrees Celsius}
    \label{A}
\end{figure}
\begin{figure}[H]
    \centering
    \begin{subfigure}[b]{0.45\textwidth}
        \centering
        \includegraphics[width=\textwidth]{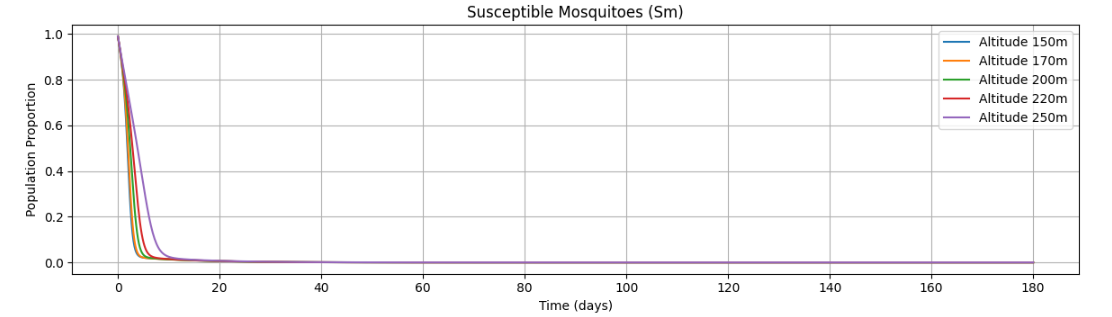}
        \caption{Susceptible mosquito}
        \label{fig:subfiga27}
    \end{subfigure}
    \hfill
    \begin{subfigure}[b]{0.45\textwidth}
        \centering
        \includegraphics[width=\textwidth]{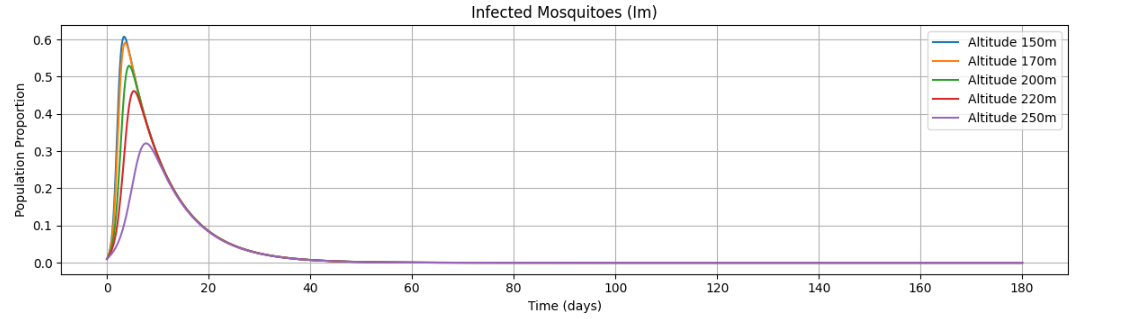}
        \caption{Infected mosquito}
        \label{fig:subfigb28}
    \end{subfigure}
    \caption{Dynamics of mosquito population when temperature is \( 28 \) degrees Celsius}
    \label{B}
\end{figure}
\begin{figure}[H]
    \centering
    \begin{subfigure}[b]{0.45\textwidth}
        \centering
        \includegraphics[width=\textwidth]{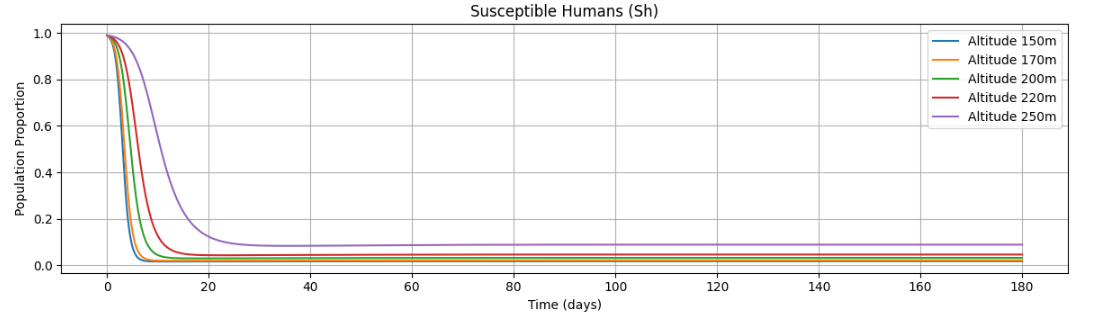}
        \caption{Susceptible human}
        \label{fig:subfiga29}
    \end{subfigure}
    \hfill
    \begin{subfigure}[b]{0.45\textwidth}
        \centering
        \includegraphics[width=\textwidth]{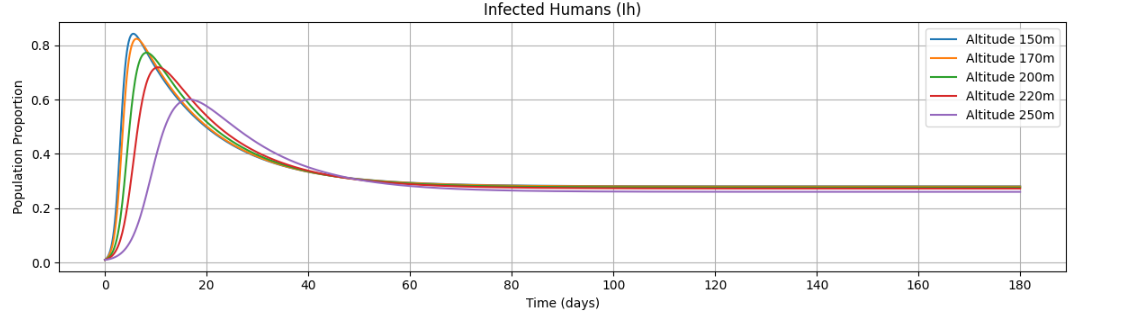}
        \caption{Infected human}
        \label{fig:subfigb30}
    \end{subfigure}
    \caption{ Dynamics of human population when temperature is \( 35\) degrees Celsius}
    \label{C}
\end{figure}
\begin{figure}[H]
    \centering
    \begin{subfigure}[b]{0.45\textwidth}
        \centering
        \includegraphics[width=\textwidth]{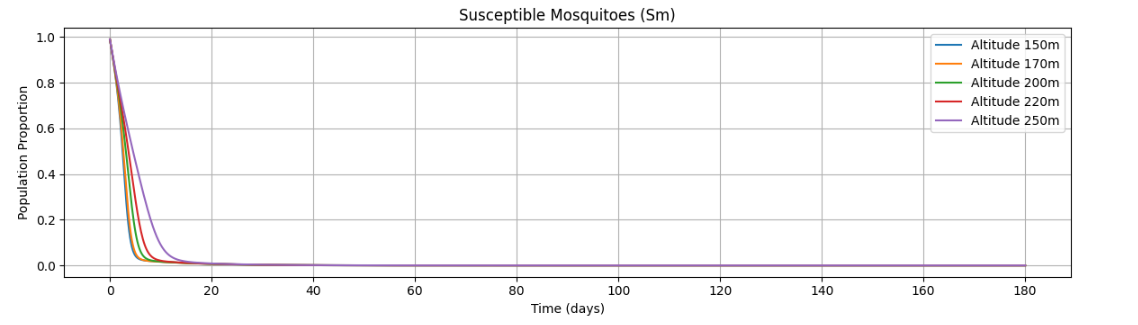}
        \caption{Susceptible mosquito}
        \label{fig:subfiga31}
    \end{subfigure}
    \hfill
    \begin{subfigure}[b]{0.45\textwidth}
        \centering
        \includegraphics[width=\textwidth]{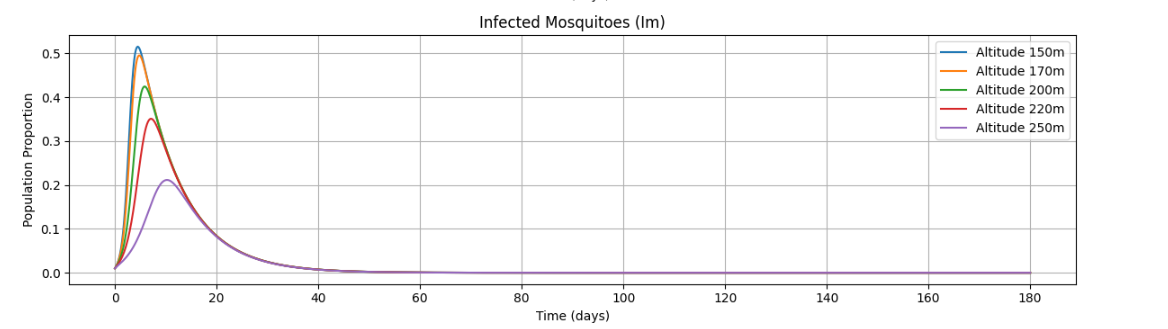}
        \caption{Infected mosquito}
        \label{fig:subfigb32}
    \end{subfigure}
    \caption{ Dynamics of mosquito population when temperature is \( 35\) degrees Celsius}
    \label{D}
\end{figure}
 
\begin{figure}[H]
    \centering
    \begin{subfigure}[b]{0.45\textwidth}
        \centering
        \includegraphics[width=\textwidth]{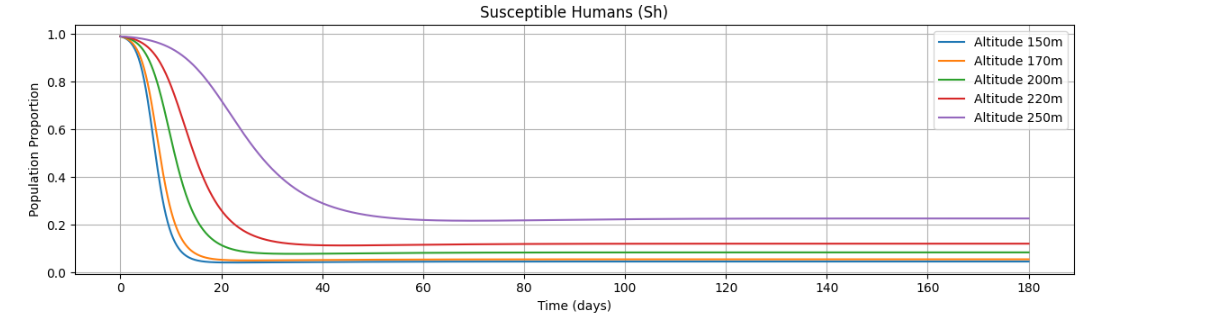}
        \caption{Susceptible human}
        \label{fig:subfiga33}
    \end{subfigure}
    \hfill
    \begin{subfigure}[b]{0.45\textwidth}
        \centering
        \includegraphics[width=\textwidth]{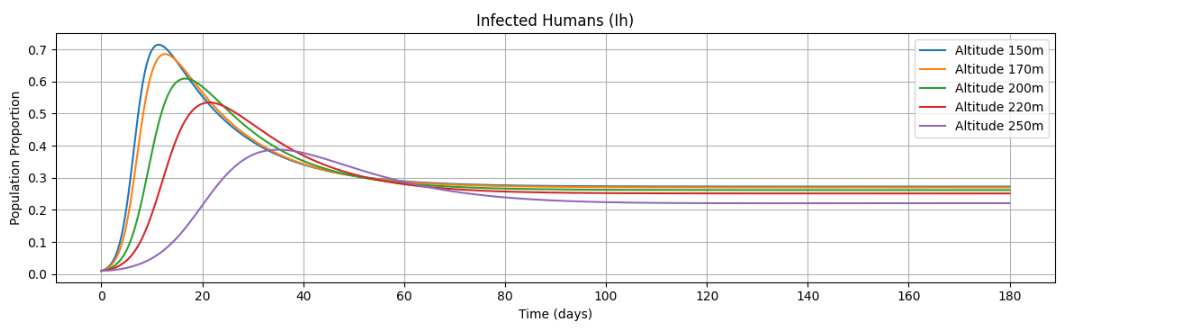}
        \caption{Infected human}
        \label{fig:subfigb34}
    \end{subfigure}
     \caption{ Dynamics of human population when temperature is \( 42\) degrees Celsius}
    \label{E}
\end{figure}
\begin{figure}[H]
    \centering
    \begin{subfigure}[b]{0.45\textwidth}
        \centering
        \includegraphics[width=\textwidth]{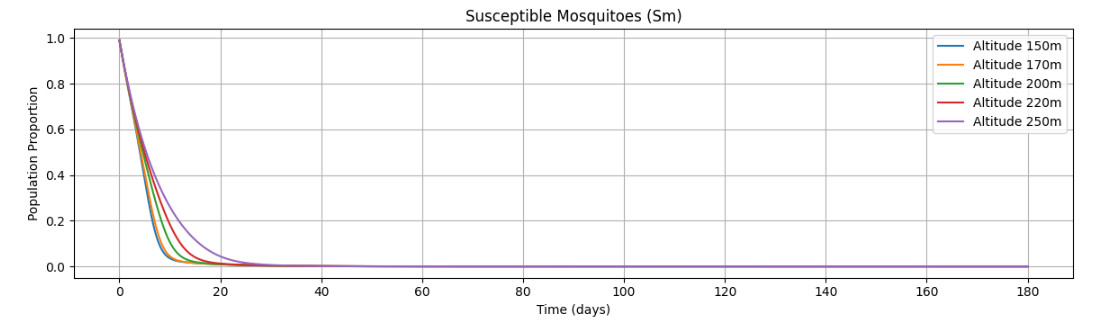}
        \caption{Susceptible mosquito}
        \label{fig:subfiga35}
    \end{subfigure}
    \hfill
    \begin{subfigure}[b]{0.45\textwidth}
        \centering
        \includegraphics[width=\textwidth]{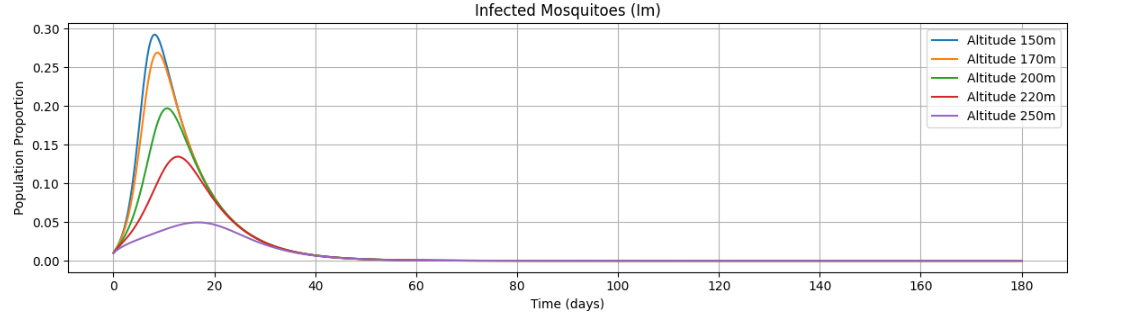}
        \caption{Infected mosquito}
        \label{fig:subfigb36}
    \end{subfigure}
    \caption{Dynamics of mosquito popualtion when temperature is \( 42\) degrees Celsius}
    \label{F}
\end{figure}
 
We also observe that the transmission achieves its highest value when the height is 150 m, the lowest value among the chosen values. To make sure 150 m is the threshold value for the altitude. Here values of altitude considered are 80 m and 100 m. The effect of height on the transmission rate while maintaining a constant temperature of $28^\circ$ can be observed in Figures~\ref{G} and~\ref{H}, while at $35^\circ$ it can be seen in Figures~\ref{I} and~\ref{J}, and at $42^\circ$ in Figures~\ref{K} and~\ref{L}.

\begin{figure}[H]
    \centering
    \begin{subfigure}[b]{0.45\textwidth}
        \centering
        \includegraphics[width=\textwidth]{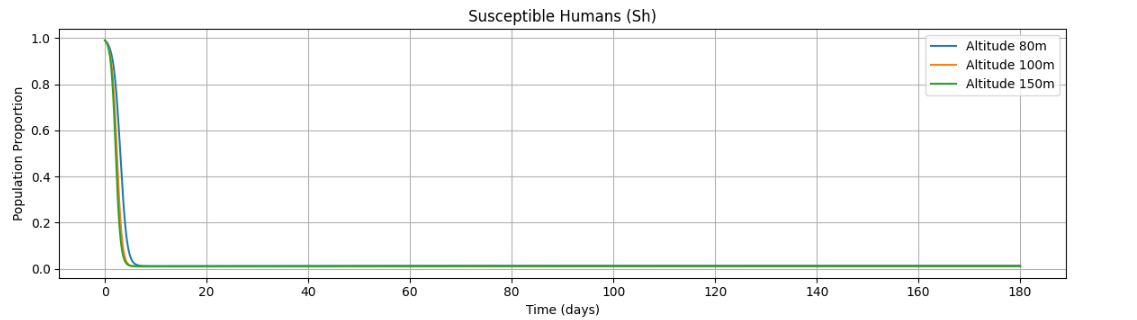}
        \caption{Susceptible humans}
        \label{fig:subfiga37}
    \end{subfigure}
    \hfill
    \begin{subfigure}[b]{0.45\textwidth}
        \centering
        \includegraphics[width=\textwidth]{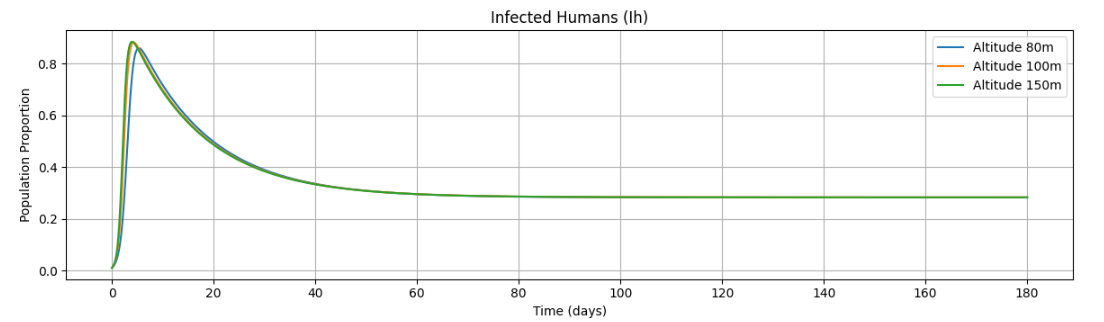}
        \caption{Infected human}
        \label{fig:subfigb38}
    \end{subfigure}
     \caption{Dynamics of human population when  temperature is \( 28 \) degrees Celsius}
    \label{G}
\end{figure}
\begin{figure}[H]
    \centering
    \begin{subfigure}[b]{0.45\textwidth}
        \centering
        \includegraphics[width=\textwidth]{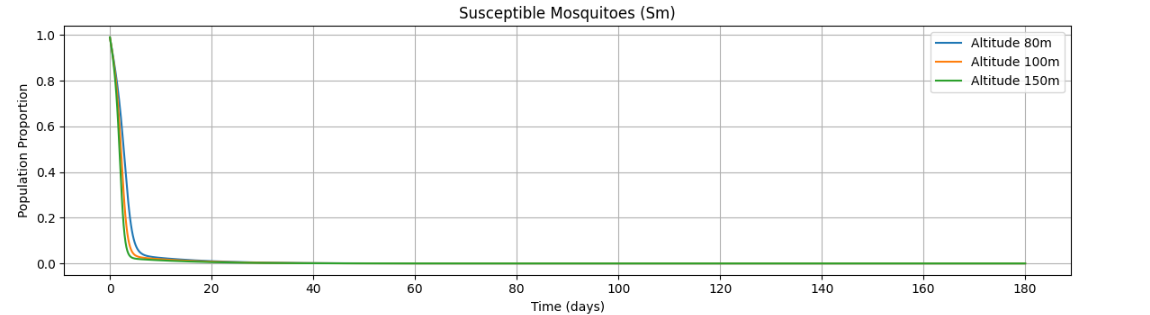}
        \caption{Susceptible mosquito}
        \label{fig:subfiga39}
    \end{subfigure}
    \hfill
    \begin{subfigure}[b]{0.45\textwidth}
        \centering
        \includegraphics[width=\textwidth]{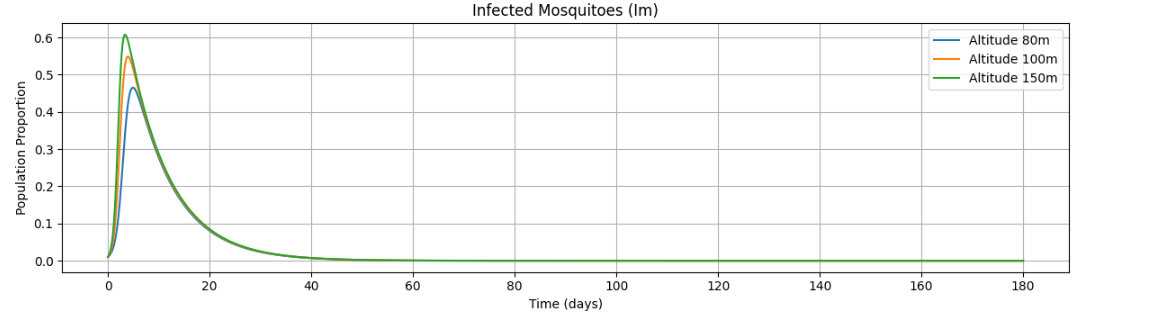}
        \caption{Infected mosquito}
        \label{fig:subfigb40}
    \end{subfigure}
    \caption{Dynamics of mosquito population when temperature is \( 28 \) degrees Celsius}
    \label{H}
\end{figure}
\begin{figure}[H]
    \centering
    \begin{subfigure}[b]{0.45\textwidth}
        \centering
        \includegraphics[width=\textwidth]{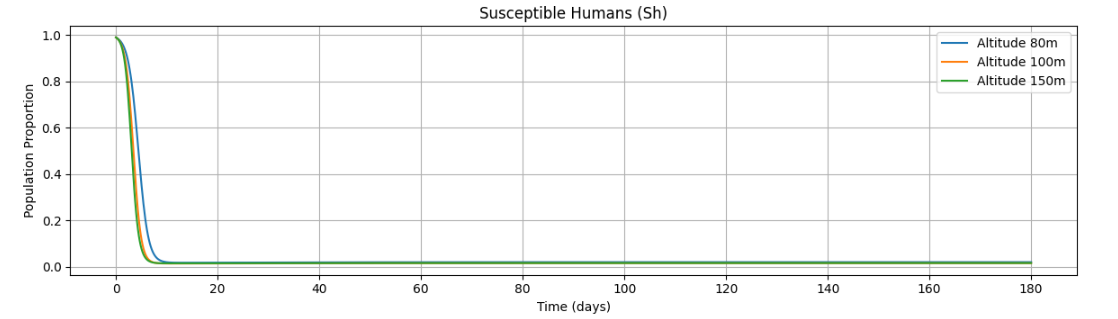}
        \caption{Susceptible human}
        \label{fig:subfiga41}
    \end{subfigure}
    \hfill
    \begin{subfigure}[b]{0.45\textwidth}
        \centering
        \includegraphics[width=\textwidth]{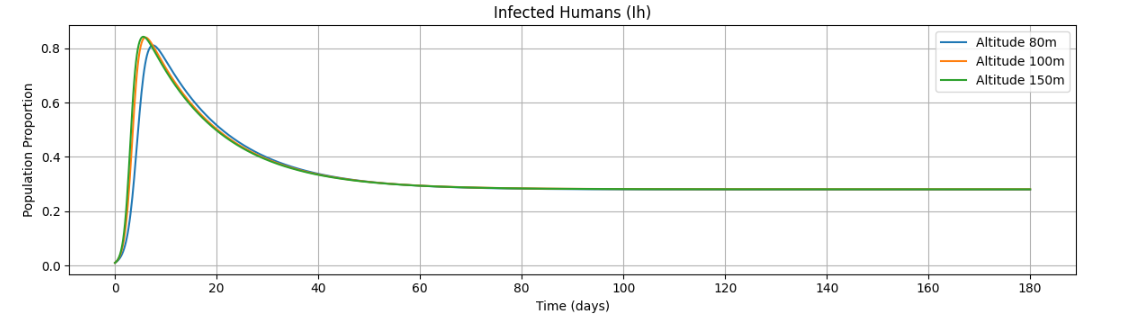}
        \caption{Infected human}
        \label{fig:subfigb42}
    \end{subfigure}
    \caption{Dynamics of human population when temperature is \( 35\) degrees Celsius}
    \label{I}
\end{figure}
\begin{figure}[H]
    \centering
    \begin{subfigure}[b]{0.45\textwidth}
        \centering
        \includegraphics[width=\textwidth]{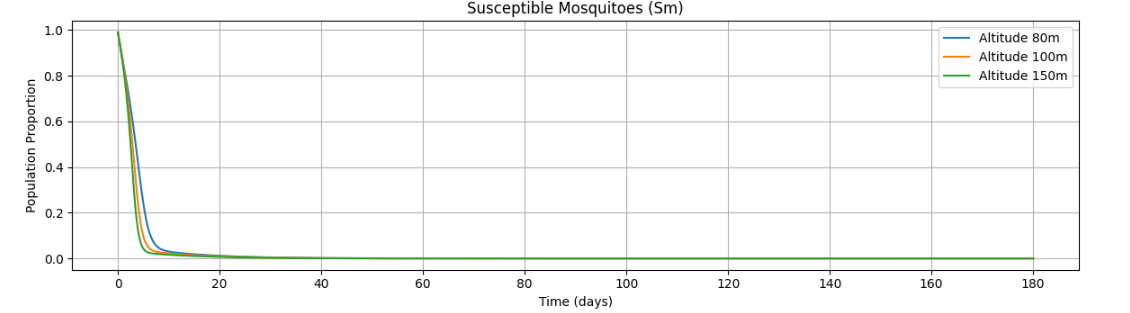}
        \caption{Susceptible mosquito}
        \label{fig:subfiga43}
    \end{subfigure}
    \hfill
    \begin{subfigure}[b]{0.45\textwidth}
        \centering
        \includegraphics[width=\textwidth]{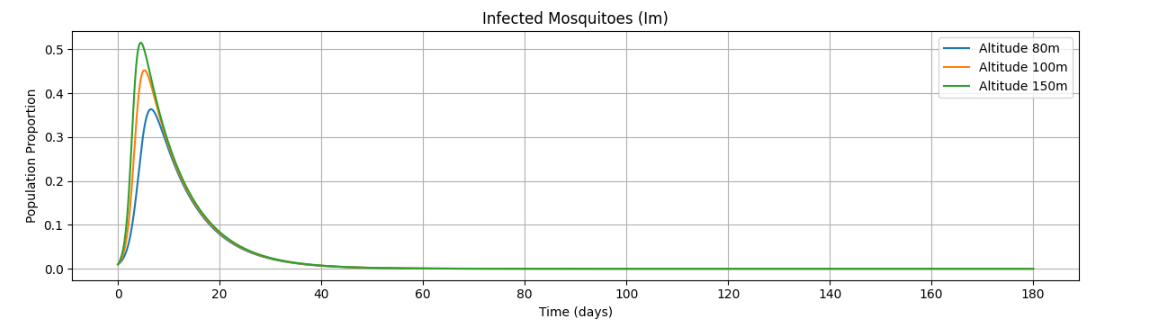}
        \caption{Infected mosquito}
        \label{fig:subfigb44}
    \end{subfigure}
    \caption{Dynamics of mosquito population when temperature is \( 35\) degrees Celsius}
    \label{J}
\end{figure}
 
\begin{figure}[H]
    \centering
    \begin{subfigure}[b]{0.45\textwidth}
        \centering
        \includegraphics[width=\textwidth]{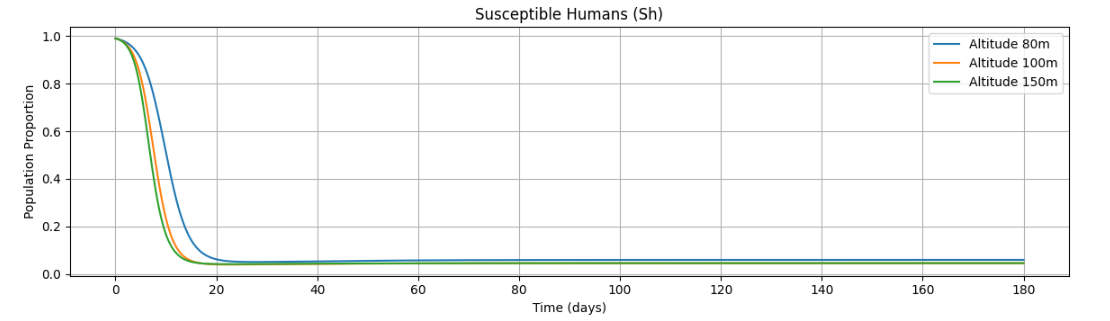}
        \caption{Susceptible human}
        \label{fig:subfiga45}
    \end{subfigure}
    \hfill
    \begin{subfigure}[b]{0.45\textwidth}
        \centering
        \includegraphics[width=\textwidth]{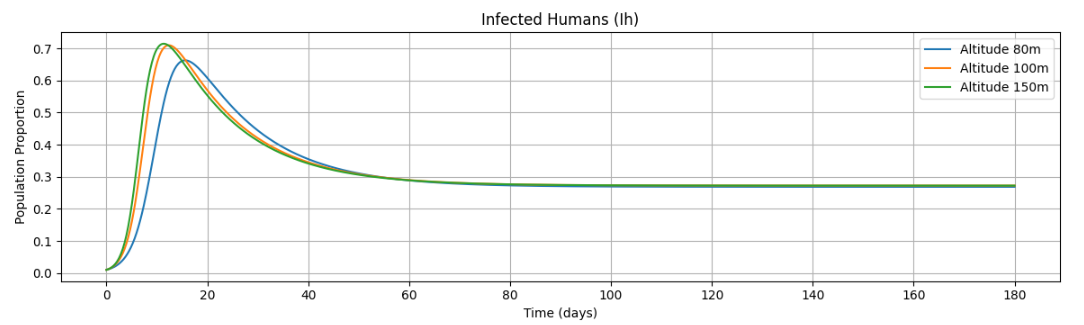}
        \caption{Infected human}
        \label{fig:subfigb46}
    \end{subfigure}
     \caption{Dynamics of human population when temperature is \( 42\) degrees Celsius}
    \label{K}
\end{figure}
\begin{figure}[H]
    \centering
    \begin{subfigure}[b]{0.45\textwidth}
        \centering
        \includegraphics[width=\textwidth]{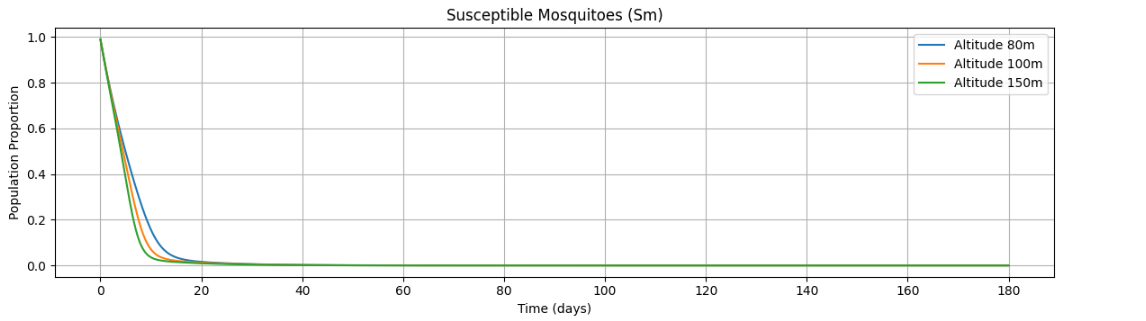}
        \caption{Susceptible mosquito}
        \label{fig:subfiga47}
    \end{subfigure}
    \hfill
    \begin{subfigure}[b]{0.45\textwidth}
        \centering
        \includegraphics[width=\textwidth]{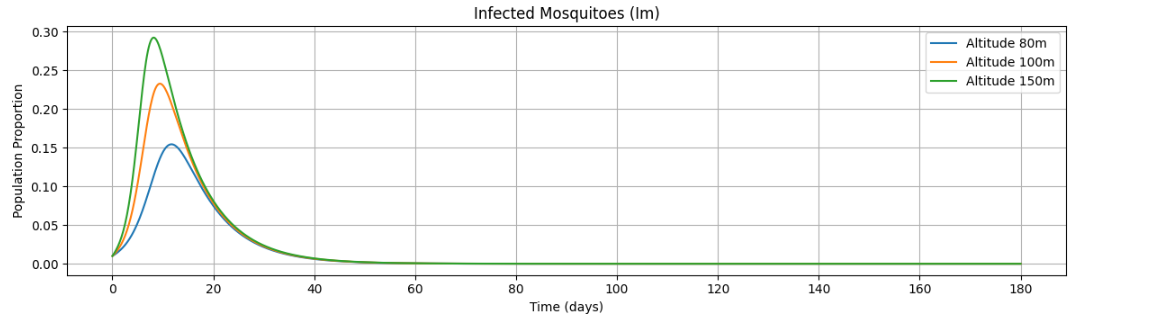}
        \caption{Infected mosquito}
        \label{fig:subfigb48}
    \end{subfigure}
    \caption{ Dynamics of mosquito population when temperature is \( 42\) degrees Celsius}
    \label{L}
\end{figure}

Thus we observe that the transmission rate is the highest when the height is $150$ and in this case, it is the highest value among the chosen values thus from this we can conclude that $h = 150$  is the threshold value for the altitude.  
The function 
$f(h) = e^{-\frac{h^{2}}{\xi^{2}}}(1-e^{-\frac{h^{2}}{\xi^{2}}})$
achieves its highest value when 
$h = \xi{\sqrt{\ln(2)}}$
and thus the transmission rate increases as the value of height approaches $h = {\xi\sqrt{\ln(2)}}$. By substituting both the values of $\xi$ and taking the average we get approximately $142.1$ thus the optimal value for altitude can be concluded to be in the range $140 - 150$ and since it can be observed that the trajectory for all of these values is nearly the same. \\
\newline 
\section{Parameter estimation}
An important factor influencing disease transmission is the parameters associated with the transmission dynamics. Understanding these parameters is crucial for medical professionals, as it enables them to assess the severity of the disease and implement appropriate interventions. This section presents three distinct neural network models: Artificial Neural Networks (ANNs), Recurrent Neural Networks (RNNs), and Physics-Informed Neural Networks (PINNs) for estimating these parameters. Utilizing the predicted parameters, we can forecast the trajectories of the various compartments involved in the disease dynamics.
 
 For this study, the Artificial Neural Networks (ANNs) utilized comprised a total of five layers, including three hidden layers. All layers, except for the output layer, contained 15 dense units and employed the sigmoid function as the activation function. The output layer featured seven dense units with no activation function applied. The Recurrent Neural Networks (RNNs) architecture implemented in this work consisted of three layers: one input layer, one dropout layer, and one output layer. The input layer was composed of 50 LSTM units with the ReLU activation function. The dropout layer had a dropout rate of 20 \%, while the output layer contained seven dense units without an activation function.
The Physics-Informed Neural Networks (PINNs) architecture mirrored that of the ANN, with the primary distinction being the number of nodes in the input and output layers, which were one and five, respectively, in the PINNs architecture.

\subsection{Methodology}
Since three different architectures of neural networks are used there is a difference between methodologies used for ANN, RNN, and PINN. For ANN and RNN, both of these models have been trained on a training dataset that has 1000 data points where the input is the first 10 points of the trajectories of all of the compartments and the output is the corresponding parameters. For PINN the methodology is completely different from the earlier models since here there is no involvement of a training dataset but instead an assumed set of parameters is chosen and by minimizing the loss function the assumed set of parameters is made to converge to the actual set of parameters by comparing the obtained and actual trajectories. Here the input is the time whereas the output is the trajectories of all the five compartments.

For a novel comparison between the models, each model is simulated for 20000 epochs. Details of actual and predicted parameters are given in the tables below.  
\begin{table}[H]
\centering
\begin{tabular}{|c|c|c|c|}
\hline
\textbf{Parameter} & \textbf{Predicted Value} & \textbf{Actual Value} & \textbf{Relative Error (\%)} \\ \hline
Birth rate of humans & 0.38273 & 0.40000 & 4.32 \\ \hline
Transmission rate of humans & 0.28249 & 0.30000 & 5.84 \\ \hline
Death rate of humans & 0.09973 & 0.10000 & 0.27 \\ \hline
Recovery rate of humans & 0.04640 & 0.01000 & 364.01 \\ \hline
Birth rate of mosquitoes & 0.07613 & 0.05000 & 52.25 \\ \hline
Transmission rate of mosquitoes & 0.01045 & 0.02000 & 47.73 \\ \hline
Death rate of mosquitoes & 0.05757 & 0.04000 & 43.92 \\ \hline
\end{tabular}
\caption{Comparison of predicted and actual values with relative errors for ANN}
\label{tab:relative_errors}
\end{table}

\begin{table}[H]
\centering
\begin{tabular}{|c|c|c|c|}
\hline
\textbf{Parameter} & \textbf{Predicted Value} & \textbf{Actual Value} & \textbf{Relative Error (\%)} \\ \hline
Birth rate of humans & 0.19740 & 0.40000 & 50.65 \\ \hline
Transmission rate of humans & 0.18619 & 0.30000 & 37.94 \\ \hline
Death rate of humans & 0.34348 & 0.10000 & 243.48 \\ \hline
Recovery rate of humans & 0.21954 & 0.01000 & 2095.38 \\ \hline
Birth rate of mosquitoes & 0.18898 & 0.05000 & 277.96 \\ \hline
Transmission rate of mosquitoes & 0.24995 & 0.02000 & 1149.77 \\ \hline
Death rate of mosquitoes & 0.38605 & 0.04000 & 865.13 \\ \hline
\end{tabular}
\caption{Comparison of predicted and actual values with relative errors for RNN}
\label{tab:updated_relative_errors}
\end{table}

\begin{table}[H]
\centering
\begin{tabular}{|c|c|c|c|}
\hline
\textbf{Parameter} & \textbf{Predicted Value} & \textbf{Actual Value} & \textbf{Relative Error (\%)} \\ \hline
Birth rate of humans & 0.39200 & 0.40000 & 2.00 \\ \hline
Transmission rate of humans & 0.28186 & 0.30000 & 6.05 \\ \hline
Death rate of humans & 0.09793 & 0.10000 & 2.07 \\ \hline
Recovery rate of humans & 0.00949 & 0.01000 & 5.07 \\ \hline
Birth rate of mosquitoes & 0.04593 & 0.05000 & 8.14 \\ \hline
Transmission rate of mosquitoes & 0.01969 & 0.02000 & 1.56 \\ \hline
Death rate of mosquitoes & 0.03600 & 0.04000 & 10.01 \\ \hline
\end{tabular}
\caption{Comparison of predicted and actual values with relative errors for PINN}
\label{pinn}
\end{table}

\subsection{Prediction of trajectories}
We observe that out of all the models PINNs can be observed to be superior to the remaining models. One of the main reasons why PINNs proved to be a superior model is because disease transmissions are not random phenomena but are phenomena following a certain law and thus PINNs not only model the data but also model the associated physics law thus enabling the models to make predictions with very high accuracy. The parameters predicted by PINNs and the trajectories of all five compartments are predicted. The evolution of the human population can be observed in Figure \ref{cell1} and the evolution of the mosquito population can be observed in Figure \ref{cell2}.

\begin{figure}[H] 
	\centering
	\begin{subfigure}{.45\textwidth} 
	\centering
	\includegraphics[scale=0.32]{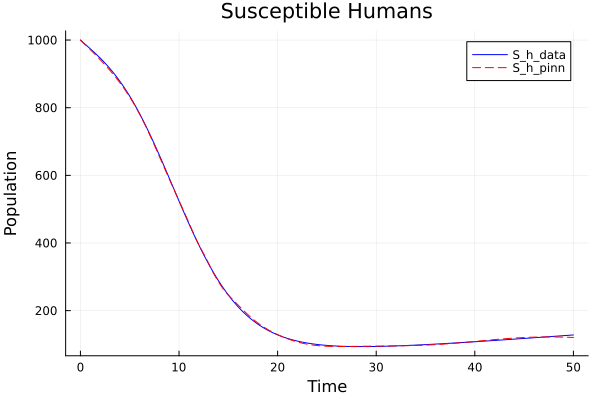}
		\caption{}
		\label{cell11}
	\end{subfigure}
	\begin{subfigure}{.45\textwidth}
	\centering
		\includegraphics[scale=0.32]{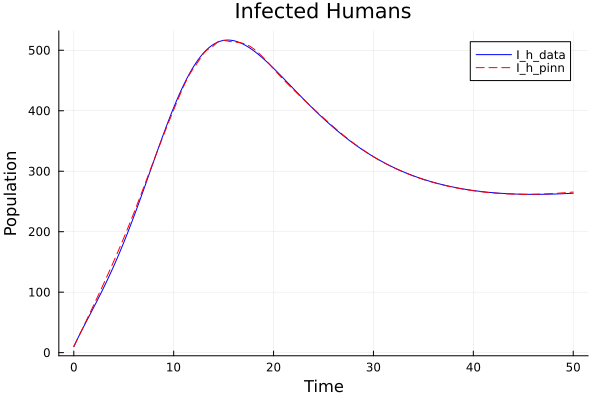}
		\caption{}
		\label{cell12}
	\end{subfigure}
	\caption{Prediction made by PINNs for the human population.}
	\label{cell1}
\end{figure} 

\begin{figure}[H] 
	\centering
	\begin{subfigure}{.45\textwidth} 
	\centering
	\includegraphics[scale=0.32]{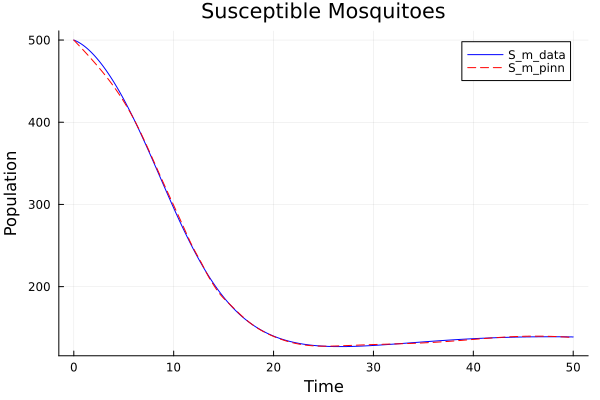}
		\caption{}
		\label{cell11}
	\end{subfigure}
	\begin{subfigure}{.45\textwidth}
	\centering
		\includegraphics[scale=0.32]{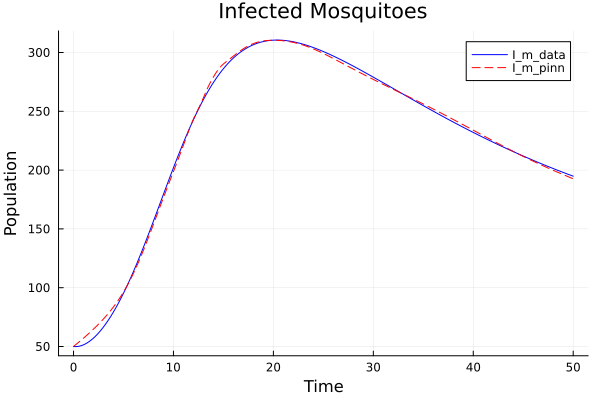}
		\caption{}
		\label{cell12}
	\end{subfigure}
	\caption{Prediction made by PINNs for the mosquito population.}
	\label{cell2}
\end{figure} 

\subsection{Finding the risk of a disease}  
The most important aspect of a disease is the determination of risk. Whenever there is a disease outbreak in a country there are some regions where there is more risk compared to the other regions thus it is essential to calculate this risk of every region. This problem statement is addressed using the method of DMD (dynamic mode decomposition) and the main reason for using this method is because  DMD can make exact predictions from raw data, unlike other deep learning methods. The complete methodology of the problem statement can be seen in  Figure  ~\ref{DMD}.\\
 
In this work, DMD is used to calculate the disease risk in a particular region. DMD gives the oscillations of the dynamics and thus using the peak values of DMD will give us the overall oscillations of the dynamics which is nothing but the measure of risk. The DMD plot and the eigenvalue spectrum can be found in Figure  ~\ref{DMD1}.\\
 
From the eigenvalue spectrum, we can observe that all the points are either on or within the unit circle. This shows that the transmission of malaria in Africa is stable. African map with the corresponding color coding based on the risk can be found in Figure  ~\ref{AM}.\\
\begin{figure}[H]
    \centering
    \includegraphics[scale = 0.55]{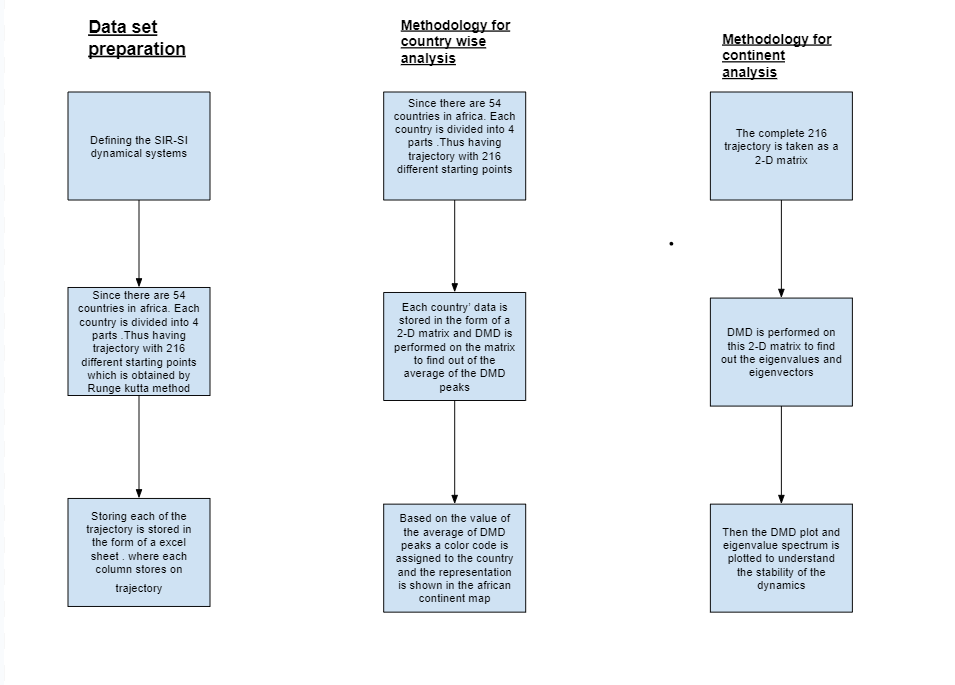}
    \caption{Flow chart of methodology}
    \label{DMD}
\end{figure}
\begin{figure}[H]
    \centering
    \begin{subfigure}[b]{0.45\textwidth}
        \centering
        \includegraphics[width=\textwidth]{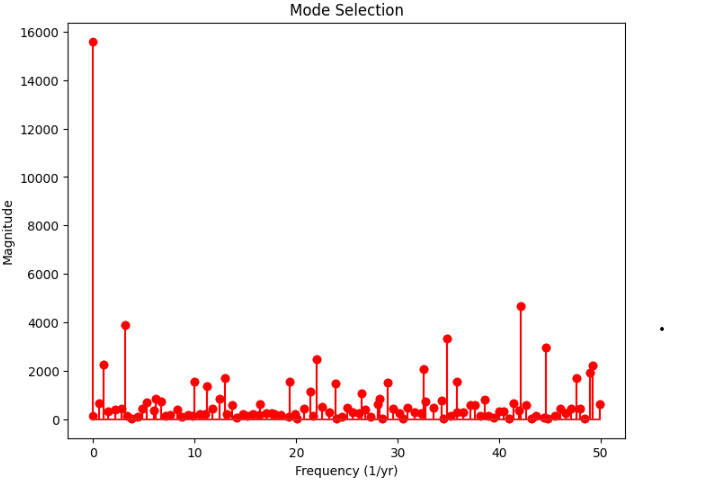}
        \caption{DMD plot}
        \label{fig:subfiga51}
    \end{subfigure}
    \hfill
    \begin{subfigure}[b]{0.45\textwidth}
        \centering
        \includegraphics[width=\textwidth]{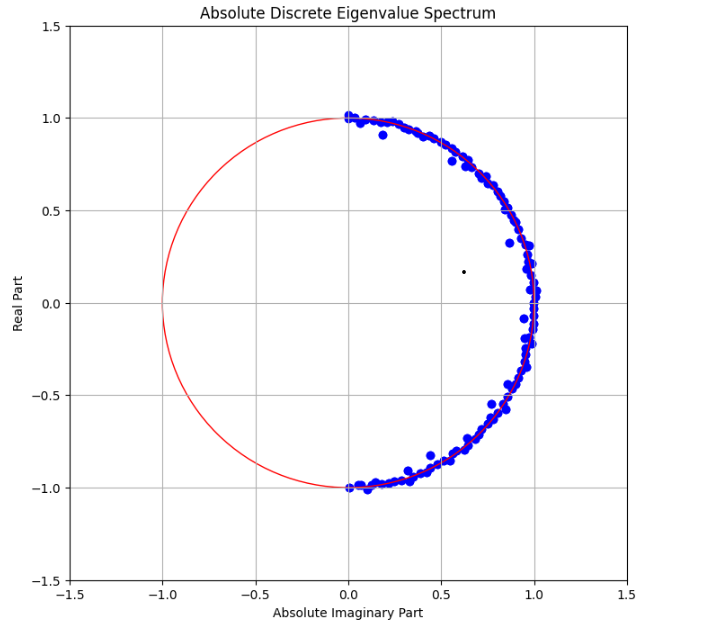}
        \caption{Infected human}
        \label{fig:subfigb52}
    \end{subfigure}
    \caption{DMD plot and the eigenvalue spectrum}
    \label{DMD1}
\end{figure}

\begin{figure}[H]
    \centering
    \includegraphics[scale=0.5]{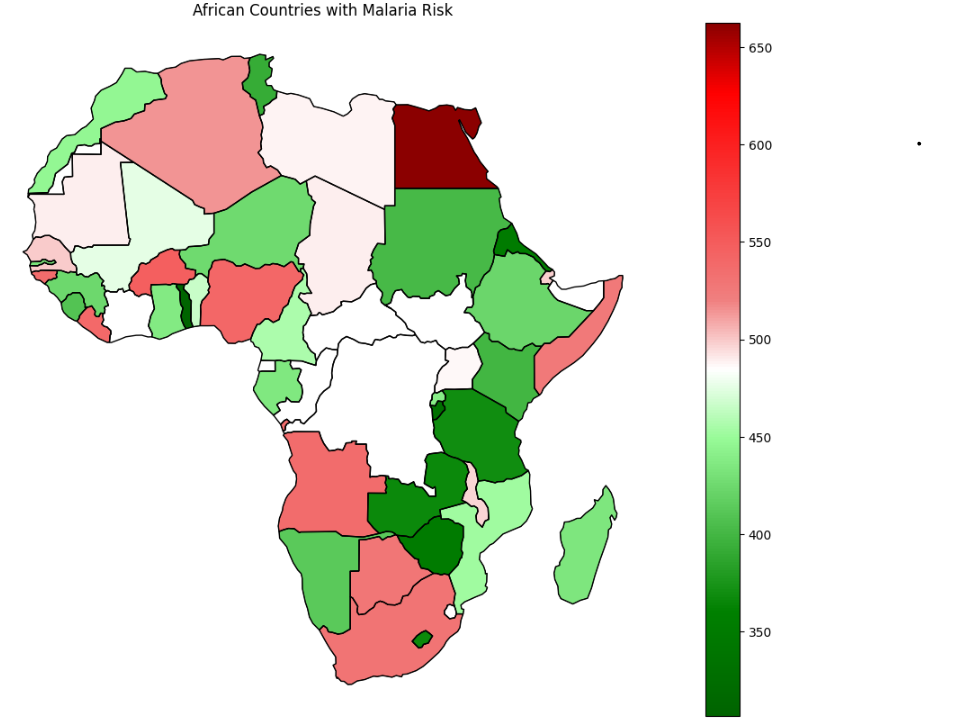}
    \caption{African Map with color coding based on risk}
    \label{AM}
\end{figure}
\section{Concluding remarks}
In this work, an attempt is made to understand the dynamics of malaria transmission using various mathematical and machine-learning techniques. The proposed theorems related to steady states were validated using numerical simulations it was followed by Parameter estimation using three different architectures of neural networks namely ANNs, RNNs, and PINNs, and the predicted parameter of the best model was then utilized to predict the trajectories of all the compartments. Finally, this work was concluded by finding the measure of risk using DMD.\\
 
This mathematical model developed in this work can be deployed by the government of any country to analyze the trajectory of infected people whenever there is a disease outbreak and then predict the future state of the country which will give medical professionals an idea of what kind of measures must be taken to reduce the disease spread \\
 
In the future, our idea is to understand the dynamics of malaria using techniques of physics-informed machine learning such as  Sparse identification of non-linear dynamics (SINDy), Universal differential equations(UDE), Neural differential equations (NeuralODE), and other variants of PINN's like VPINNs and Recurrent PINNs. We intend to predict the trajectories of the compartments using time series models such as Attention models, ResNet architecture, etc using a publicly available dataset for epidemic forecasting. We also want to study the stochastic version of this model mainly to study the random fluctuations in the birth and mortality process. 

\section{Data availability statement}
This study uses solely synthetic data, which was generated for the purpose of this research. The synthetic data is not based on real-world observations and can be freely accessed. The dataset is available at the \href{https://drive.google.com/drive/folders/1y4BdhmRC8sl81OwzrdmYZgVWns946R9X?usp=sharing}{link}.
\bibliographystyle{unsrturl}

\end{document}